\newtheorem{theorem}{Theorem}
\newtheorem{proposition}[theorem]{Proposition}
\newtheorem{lemma}[theorem]{Lemma}
\newtheorem{example}[theorem]{Example}
\newtheorem{remark}[theorem]{Remark}
\newtheorem{corollary}[theorem]{Corollary}
\newtheorem{definition}[theorem]{Definition}
\newcommand{\rk}{\mathop{\rm rank}}
\def\Fq{{\mathbb F}_q}
\def\Fqm{{\mathbb F}_{q^m}}
\def\Fqmn{{\mathbb F}_{q^m}^n}
\def\rank{\mathop{\rm rank}}
\def\M{{\mathbb M}}
\def\Mat{{\mathbb M}_{m\times n}(\Fq)}
\def\F{\mathsf{F}}
\def\rhoF{\rho_{\rule{0pt}{2ex}\mathsf{F} }}
\def\rhoG{\rho_{\rule{0pt}{2ex}\mathsf{G} }}
\def\rhoFp{\rho_{\rule{0pt}{2ex}\mathsf{F}' }}
\def\rhoFperp{\rho_{\rule{0pt}{1.8ex}\mathsf{F}^\perp }}
\def\nuF{\nu_{\rule{0pt}{1.5ex}\mathsf{F} }}
\def\nusF{\nu^*_{\rule{0pt}{1.5ex}\mathsf{F} }}
\def\PF{P(\F)} 
\begin{document}

\title[A Polymatroid Approach to Rank Metric Codes]{A polymatroid approach to generalized  weights of Rank Metric Codes}

\author[Ghorpade]{Sudhir R. Ghorpade}
\address{Department of Mathematics, 
Indian Institute of Technology Bombay  \newline \indent
Powai, Mumbai 400\,076, India}
\email{srg@math.iitb.ac.in} 
\thanks{Sudhir Ghorpade is partially supported by DST-RCN grant INT/NOR/RCN/ICT/P-03/2018 from the Dept. of Science \& Technology, Govt. of India, MATRICS grant MTR/2018/000369 from the Science and Engg. Research Board, and  IRCC award grant 12IRAWD009 from IIT~Bombay.}

\author[Johnsen]{Trygve Johnsen}
\address{Department of Mathematics and Statistics, 
 UiT-The Arctic University of Norway  \newline \indent 
N-9037 Troms{\o}, Norway}
\email{trygve.johnsen@uit.no}
\thanks{Trygve Johnsen is partially supported by grant 280731 from the Research Council of Norway.}

\subjclass[2000]{05B35, 94B60, 15A03}
\date{\today}

\begin{abstract}
We consider the notion of a  $(q,m)$-polymatroid, due to Shiromoto, and the more general notion of  $(q,m)$-demi-polymatroid, and show how generalized  weights can be defined for them. Further, we establish a duality for these weights analogous to Wei duality for generalized Hamming weights of linear codes. The corresponding results of Ravagnani for Delsarte rank metric codes,  and Mart\'{\i}nez-Pe\~{n}as and Matsumoto for relative generalized rank weights are derived as a consequence. 
\end{abstract}

\maketitle


\section{Introduction}
Linear (block) codes are objects of basic importance in the theory of error correcting codes. A $q$-ary linear code of length $n$ and dimension $k$, or in short, a $[n,k]_q$-code $C$ is simply a $k$-dimensional subspace of $\Fq^n$, where $\Fq$ denotes the finite field with $q$ elements. A basic notion here is that of Hamming distance on the space $\Fq^n$, which for two vectors $x, y \in \Fq^n$ is simply the number of nonzero coordinates in $x-y$. Rank metric codes are an important variant of linear codes, and they have gained prominence in the past few decades, partly due to myriad applications in network coding and cryptography, as also due to their intrinsic interest. Perhaps a more widely studied notion of rank metric codes is the one that goes back to Gabidulin's work \cite{Gab85} in 1985. 
A \emph{Gabidulin rank metric code}, or simply, a \emph{Gabidulin code}, of length $n$ and dimension $k$ may be defined as a $k$-dimensional 
subspace of the $n$-dimensional vector space $\Fqmn$ over the extension field  $\Fqm$ of $\Fq$. By fixing a $\Fq$-basis of $\Fqm$, we can associate to any vector in 
$\Fqmn$ an $m\times n$ matrix with entries in $\Fq$, and the {rank distance} between any $\mathbf{x}, \mathbf{y} \in \Fqmn$ is defined as the rank of the difference of the 
matrices corresponding to $\mathbf{x}$ and  $\mathbf{y}$. The notion of a Delsarte rank metric code is in fact, older (it goes back to the work \cite{Del} of Delsarte in 1978) and more general. Indeed, a 
\emph{Delsarte rank metric code}, or simply, a \emph{Delsarte code} of dimension $K$ is a $K$-dimensional subspace of the $\Fq$-linear space of all $m\times n$ matrices with entries in $\Fq$. As before, the rank distance between two $m\times n$ matrices is the rank of their difference. It is clear that a Gabidulin code of dimension $k$ is a Delsarte code of dimension $mk$. But a Delsarte code need not be a Gabidulin code, even if its dimension is divisible by $m$. 

Generalized Hamming weights (GHW), also known as higher weights, of a linear code $C$ are a natural and useful  generalization of the basic notion of minimum distance of $C$. These were studied by Wei \cite{wei91} who showed that the GHW $d_1, \dots , d_k$ of a $[n,k]_q$-code $C$ satisfy nice properties such as monotonicity ($d_1< \cdots < d_k$) and more importantly, duality whereby the GHW of $C$ and its dual $C^\perp$ determine each other. It was not immediately clear how an analogue of GHW for rank metric codes could be defined. 
But then three different definitions for the \emph{generalized rank weights} (GRW) of a Gabidulin rank metric code were proposed 
by three sets of authors working in different parts of the globe, 
viz., Oggier and Sboui \cite{OS}, Kurihara, Matsumoto and  Uyematsu \cite{KMU}, and Jurrius and Pellikaan \cite{JP2017}.  Thankfully, all three seemingly disparate definitions turn out to be equivalent (cf. \cite{JP2017}). Moreover, an analogue of Wei duality holds for the GRW; see, e.g., 
Ducoat \cite{Duc}. For the more general class of Delsarte rank metric codes,  Ravagnani \cite{ravagnani16} proposed an analogous definition of \emph{generalized weights} (GW) and showed that in the special case of Gabidulin codes, the $km$ GW of the corresponding Delsarte code are 
the same as the $k$ GRW of the Gabidulin code (in accordance with the previous definitions),  each repeated $m$ times. 
Further, Ravagnani \cite{ravagnani16} established a duality for his 
GW of Delsarte rank metric codes. The notion of dual Delsarte codes is facilitated by the \emph{trace product}, which associates to a pair $(A,B)$ of $m\times n$ matrices with entries in $\Fq$ the element $\mathrm{Trace}(AB^t)$ of $\Fq$. It is shown by Ravagnani \cite{ravagnani15} that for suitable choices of $\Fq$-bases of $\Fqm$, the notions of the (standard) dual of a  Gabidulin code and of the (trace) dual of the corresponding Delsarte code are compatible. 

In the classical case of linear codes, Britz et al \cite{BJMS} showed that Wei duality for generalized Hamming weights of linear codes is, in fact, a special case of Wei duality for matroids and also established Wei-type duality theorems for demi-matroids. It is natural, therefore, to ask if the notion of generalized (rank) weights for (Gabidulin or Delsarte)
rank metric codes can be studied in the more general context of something like matroids, and if an analogue of Wei duality can be proved in this set-up. This is the question that we address in this paper. The notion that turns out to be relevant for us is that 
of a $(q,m)$-polymatroid, which has recently been introduced by Shiromoto \cite{S}. 
(See also \cite{GJLR, G} for an essentially equivalent notion of a $q$-polymatroid.) 
Thus, we define 
generalized weights for $(q,m)$-polymatroids, and establish a Wei-type duality for them. As a corollary, we readily obtain 
the results of Ravagnani \cite{ravagnani16} for his GW of Delsarte codes and their duals, 
provided $m >n$.
The cases $m=n$ and $m<n$ can also be covered, 
and these are addressed in Remark \ref{imprem} and Proposition \ref{demisquare}, respectively.
To study the case $m=n$, and also for other purposes, we consider the more general class of $(q,m)$-demi-polymatroids and establish a duality result there. As another important application, we show how these general combinatorial objects 
can be applied to 
flags, or chains, of Delsarte rank metric codes. 
In particular, by considering pairs, i.e., flags of length $2$ of Delsarte codes, we recover several results of 
Mart\'{\i}nez-Pe\~{n}as and Matsumoto \cite{MM} on the so called relative generalized rank weights of Delsarte 
codes. We remark that $q$-analogues of matroids, called $q$-matroids and $q$-polymatroids, have been considered by Jurrius and Pellikaan \cite{JP} and by Gorla, Jurrius, Lopez, and Ravagnani \cite{GJLR}, respectively. 
However, as far as we can see, Wei-type duality for the generalised weights of these objects is not shown in these papers. 

This paper is organized as follows. In Section \ref{sec:defnnot} below, we review the definition of a $(q,m)$-polymatroid and outline some basic notions and results. Generalized weights of  a $(q,m)$-polymatroid are defined and Wei-type duality for them is established in Section \ref{sec3}. These results are then applied to Delsarte rank metric codes in Section \ref{sec4}. 
{In Section \ref{sec5} we introduce $(q,m)$-demi-polymatroids, show Wei duality for these objects, and apply it to Delsarte rank metric codes consisting of square matrices. Flags of Delsarte rank metric codes, and their dualøity theory, are discussed in Section \ref{sec5plus}.  Several examples and applications are also included here. 

\section{Preliminaries about $(q,m)$-Polymatroids}
\label{sec:defnnot}

Throughout this paper $\mathbb{N}_0$ denotes the set of all nonnegative integers, $m,n$ denote positive integers, $q$ a prime power, and $\Fq$ the finite field with $q$ elements. 
We let $E$ be the vector space $\mathbb{F}_q^n$ over $\Fq$ and let 
$$
\Sigma(E) = \text{the set of all $\Fq$-linear subspaces of $E$}.
$$
For $X\in \Sigma(E)$, we deonte by $X^{\perp}$ the dual of $X$ (with respect to the standard ``dot product"), i.e., 
$X^{\perp} = \{\mathbf{x} \in E: \mathbf{x}  \cdot \mathbf{y}  = 0 \text{ for all } \mathbf{y} \in X\}$. 
It is elementary and well-known that $X^{\perp}\in \Sigma(E)$ with $\dim X^{\perp} = n - \dim X$ and $(X^{\perp})^{\perp} = X$, although $X\cap X^{\perp}$ need not be equal to $\{\mathbf{0}\}$, but of course $E^\perp = \{\mathbf{0}\}$. 

The following key defnition 
is due to Shiromoto \cite[Definition 2]{S}. 

\begin{definition}
{\rm 
A $(q,m)$-{\em polymatroid} is an ordered pair $P= (E,\rho)$ consisting of
the vector space $E=\mathbb{F}_q^n$ 
 and a function $\rho:\Sigma(E)\to\mathbb{N}_0$
satisfying the following three conditions for all $X,Y \in \Sigma(E)$:  
\begin{list}{}{\leftmargin=1em\topsep=1.5mm\itemsep=1mm}
\item[{\rm (R1)}]  $0\leq \rho(X) \leq m\dim X$;
\item[{\rm (R2)}] If $X \subseteq Y$, then $\rho(X) \le \rho(Y)$;
\item[{\rm (R3)}] $\rho(X+Y)+\rho(X \cap Y) \le \rho(X)+\rho(Y)$.
\end{list}
The nonnegative integer $\rho(E)$ is called the \emph{rank} of $P$ and is denoted by $\rank P$. 
The function $\rho$ may be called the \emph{rank function} of $P$.
}
\end{definition}


\begin{remark}
{\rm
As Shiromoto \cite{S} remarks, a $(q,m)$-polymatroid is a $q$-analogue of $k$-polymatroids, and a $(q,1)$-matroid is a $q$-analogue of matroids. An alternative, but somewhat different approach, to $(q,m)$-polymatroids is provided by  Gorla, Jurrius, Lopez, and Ravagnani \cite[Definition 4.1.]{GJLR}. 
}
\end{remark}

The following basic fact is proved in \cite{S}. 

\begin{proposition} \label{dualdef}  \cite[Proposition 5]{S}
Let $P= (E,\rho)$ be a $(q,m)$-{polymatroid}. Define $\rho^*:\Sigma(E) \to \mathbb{N}_0$ by 
$$
\rho^*(X)=\rho(X^{\perp})+m\dim X-\rho(E)
\quad \text{for } X\in \Sigma (E).
$$
Then $(E,\rho^*)$ is also a $(q,m)$-polymatroid.
\end{proposition}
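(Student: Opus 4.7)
The plan is to verify the three polymatroid axioms (R1), (R2), (R3) for $\rho^*$ directly, using the corresponding properties of $\rho$ together with standard identities for the duality $(\cdot)^\perp$ on $\Sigma(E)$: namely $\dim X^\perp = n - \dim X$, $(X+Y)^\perp = X^\perp \cap Y^\perp$, and $(X\cap Y)^\perp = X^\perp + Y^\perp$. The easiest and hardest parts are both essentially bookkeeping, but (R1) needs one small trick.

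For (R1), the upper bound $\rho^*(X) \le m\dim X$ is immediate from (R2) for $\rho$ applied to $X^\perp \subseteq E$, which gives $\rho(X^\perp) \le \rho(E)$. For the lower bound $\rho^*(X) \ge 0$, I would pick any vector space complement $Y \in \Sigma(E)$ of $X^\perp$ in $E$, so $Y + X^\perp = E$, $Y \cap X^\perp = \{\mathbf{0}\}$, and $\dim Y = \dim X$. Applying (R3) for $\rho$ to the pair $Y, X^\perp$ gives $\rho(E) + \rho(\{\mathbf{0}\}) \le \rho(Y) + \rho(X^\perp)$, and (R1) for $\rho$ yields $\rho(Y) \le m\dim Y = m\dim X$; hence $\rho(E) \le m\dim X + \rho(X^\perp)$, which is exactly $\rho^*(X) \ge 0$.

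For (R2), suppose $X \subseteq Y$; then $Y^\perp \subseteq X^\perp$ and it suffices to establish the Lipschitz-type bound $\rho(X^\perp) - \rho(Y^\perp) \le m\bigl(\dim X^\perp - \dim Y^\perp\bigr) = m(\dim Y - \dim X)$. I would prove the general statement that $A \subseteq B$ in $\Sigma(E)$ implies $\rho(B) - \rho(A) \le m(\dim B - \dim A)$ by choosing a complement $C$ of $A$ inside $B$, applying (R3) to $A, C$ (with $A \cap C = \{\mathbf{0}\}$), and then using (R1) on $C$ to bound $\rho(C) \le m\dim C$. Substituting $A = Y^\perp$, $B = X^\perp$ and rearranging yields $\rho^*(X) \le \rho^*(Y)$.

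For (R3), I would just compute both sides. Using the two identities $(X+Y)^\perp = X^\perp \cap Y^\perp$ and $(X\cap Y)^\perp = X^\perp + Y^\perp$ together with $\dim(X+Y) + \dim(X\cap Y) = \dim X + \dim Y$, the inequality $\rho^*(X+Y) + \rho^*(X\cap Y) \le \rho^*(X) + \rho^*(Y)$ collapses, after the $m(\dim X + \dim Y) - 2\rho(E)$ terms cancel, to $\rho(X^\perp \cap Y^\perp) + \rho(X^\perp + Y^\perp) \le \rho(X^\perp) + \rho(Y^\perp)$, which is precisely (R3) for $\rho$ applied to $X^\perp$ and $Y^\perp$. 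The main (mild) obstacle is (R1): one must be careful that $X + X^\perp$ need not equal $E$ under the standard dot product, which is why I introduce an auxiliary complement $Y$ rather than trying to work with $X$ and $X^\perp$ directly.
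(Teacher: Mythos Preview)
Your proof is correct. The paper itself does not supply a proof of this proposition; it simply cites \cite[Proposition~5]{S} and states the result. So there is no ``paper's own proof'' to compare against directly.

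That said, your argument is exactly the standard one, and in fact the complement-and-submodularity trick you use for the lower bound in (R1) and for the Lipschitz bound in (R2) is precisely the argument the paper \emph{does} write out later, in the proof of Proposition~\ref{tech}(a). So your approach is fully aligned with the paper's methods. Your caution about not assuming $X + X^\perp = E$ is well placed and is the only genuine subtlety here.
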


If $P= (E,\rho)$ and $\rho^*$ are as in Proposition~\ref{dualdef}, 
then the $(q,m)$-polymatroid $(E,\rho^*)$ 
is denoted  by $P^*$ and called   the 
\emph{dual} 
of $P$. Note that $\rho(\{\mathbf{0}\}) = 0$ by (R1) and so 
$$
\rk P^* = \rho^*(E)=\rho(\{\mathbf{0}\})+m\dim E-\rho(E)=mn-\rk P \quad \text{and}\quad (P^*)^*=P.
$$
\begin{definition}
{\rm 
Let $P= (E,\rho)$ be a $(q,m)$-{polymatroid}. 
The \emph{nullity function} of $P$ is the map $\nu: \Sigma(E)\longmapsto\mathbb{N}_0$  defined by 
$$
{\nu}(X)=m\dim X - \rho(X) \quad \text{for } X\in \Sigma (E).
$$
The \emph{conullity function} of $P$ is the map 
$\nu^*: \Sigma(E)\longmapsto\mathbb{N}_0$  defined by 
$$
{\nu}^*(X)=m\dim X -\rho^*(X)= \rho(E)-\rho(X^{\perp}) \quad \text{for } X\in \Sigma (E). 
$$
}
\end{definition}


By way of giving an example of a $(q,m)$-polymatroid, we describe below an important class of $(q,m)$-polymatroids. 

\begin{example}
{\rm 
Let $r$ be a nonnegative integer $\le n$. 
The \emph{uniform $(q,m)$-polymatroid} $U(r,n)$ is defined as $(E,\rho)$, where $E=\mathbb{F}_q^n$, and $\rho(X)=m\dim X,$ for all $X\in \Sigma(E)$ with $\dim X \le r$, while $\rho(X)=mr$ for all $X\in \Sigma(E)$ with $\dim X \ge r$. It is easy to see that $U(r,n)$ is indeed a $(q,m)$-polymatroid and also that $U(r,n)^*=U(n-r,n).$
}
\end{example} 

Elementary properties of nullity and conullity functions are given 
below. The 
proof is analogous to \cite[Lemma 4]{S}, but included  for the convenience of the reader. 

\begin{proposition} \label{tech}
Let $P= (E, \rho)$ be a $(q,m)$-{polymatroid} and let $X,Y \in \Sigma (E)$ with $X\subseteq Y$.  Then: 
\begin{enumerate}
\item[{\rm (a)}]  ${\nu}(X) \le {\nu}(Y)$ and 
$\, {\nu}^*(X) \le {\nu}^*(Y)$; 
\item[{\rm (b)}] ${\nu}(Y)-{\nu}(X)  \le m(\dim Y - \dim X)$ and 
$\, {\nu}^*(Y)-{\nu}^*(X)  \le m(\dim Y - \dim X)$.
\end{enumerate}
\end{proposition}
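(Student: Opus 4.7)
The plan is to derive both parts by first establishing a single key lemma: for any $X \subseteq Y$ in $\Sigma(E)$, the rank increment is bounded by
\[
\rho(Y) - \rho(X) \le m(\dim Y - \dim X).
\]
Once we have this together with axiom (R2), both (a) and (b) for $\nu$ follow immediately by rewriting $\nu(Y)-\nu(X) = m(\dim Y-\dim X) - \bigl(\rho(Y)-\rho(X)\bigr)$, and then the dual statements fall out by a simple complementation trick.

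First I would prove the rank-increment lemma as follows. Pick any complement $Y'$ of $X$ inside $Y$, so $Y = X + Y'$, $X\cap Y' = \{\mathbf{0}\}$, and $\dim Y' = \dim Y - \dim X$. Submodularity (R3) gives
\[
\rho(Y) + \rho(X \cap Y') \le \rho(X) + \rho(Y').
\]
Since $\rho(\{\mathbf{0}\}) = 0$ by (R1) and $\rho(Y') \le m\dim Y'$ again by (R1), we deduce
\[
\rho(Y) - \rho(X) \le \rho(Y') \le m\dim Y' = m(\dim Y - \dim X).
\]

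Now for (a) and (b) applied to $\nu$: rearranging the definition gives
\[
\nu(Y) - \nu(X) = m(\dim Y - \dim X) - \bigl(\rho(Y) - \rho(X)\bigr).
\]
The inequality $\rho(Y)-\rho(X) \le m(\dim Y - \dim X)$ just proved shows $\nu(Y)-\nu(X) \ge 0$, giving (a); the inequality $\rho(Y)-\rho(X) \ge 0$ from (R2) shows $\nu(Y)-\nu(X) \le m(\dim Y-\dim X)$, giving (b).

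For the conullity statements, I would exploit the order-reversal $X \subseteq Y \Rightarrow Y^{\perp} \subseteq X^{\perp}$, which together with $\dim X^\perp - \dim Y^\perp = \dim Y - \dim X$ is the whole content needed. From $\nu^*(X) = \rho(E) - \rho(X^\perp)$, (R2) applied to $Y^\perp \subseteq X^\perp$ gives $\rho(Y^\perp) \le \rho(X^\perp)$, so $\nu^*(X) \le \nu^*(Y)$, which is (a). For (b), using the rank-increment lemma on the pair $Y^\perp \subseteq X^\perp$,
\[
\nu^*(Y) - \nu^*(X) = \rho(X^\perp) - \rho(Y^\perp) \le m(\dim X^\perp - \dim Y^\perp) = m(\dim Y - \dim X),
\]
as desired. (Alternatively, since $P^*$ is itself a $(q,m)$-polymatroid by Proposition~\ref{dualdef}, the two $\nu^*$ statements are just the $\nu$ statements applied to $P^*$, which avoids re-invoking the lemma.) I expect no real obstacle here — the only small point to be careful about is ensuring the complement $Y'$ of $X$ in $Y$ actually exists in $\Sigma(E)$, which is immediate since every $\Fq$-subspace of a finite-dimensional space admits a linear complement.
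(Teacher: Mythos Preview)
Your proof is correct and follows essentially the same route as the paper: both isolate the bound $\rho(Y)-\rho(X)\le m(\dim Y-\dim X)$ via a complement of $X$ in $Y$ and submodularity, then read off (a) and (b) for $\nu$ from that bound together with (R2). For $\nu^*$ the paper simply invokes the result for $\nu$ applied to $P^*$ (your parenthetical alternative), whereas your primary route via $\nu^*(X)=\rho(E)-\rho(X^\perp)$ and the order-reversal $Y^\perp\subseteq X^\perp$ is an equally short direct computation that avoids appealing to Proposition~\ref{dualdef}; either way works.
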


\begin{proof}
(a) By extending a basis of $X$ to $Y$, we can find $Z\in \Sigma(E)$ such that 
$$
X+Z = Y, \quad X\cap Z = \{\mathbf{0}\}, \quad \text{and} \quad \dim Z = \dim Y - \dim X.
$$
Thus, using (R3), we obtain 
$$
\rho(X)+\rho(Z)\ge \rho(X+Z)+\rho(X \cap Z)=\rho(Y)+\rho(\{\mathbf{0}\})=\rho(Y).
$$
On the other hand, by (R1), $\rho(Z)\le m\dim Z = m(\dim Y - \dim X)$. It follows that 
$$
\rho(Y)-\rho(X) \le m(\dim Y - \dim X).
$$
This proves that ${\nu}(X) \le {\nu}(Y)$. Replacing $P$ by $P^*$, we obtain ${\nu}^*(X) \le {\nu}^*(Y)$. 
%

(b) The desired upper bound for ${\nu}(Y)-{\nu}(X)$ follows by 
noting that by (R2), 
$$
{\nu}(Y)-{\nu}(X) = m\left( \dim Y - \dim X \right) + \rho(X) - \rho(Y) \le m\left( \dim Y - \dim X \right).
$$
As in (a), the inequality for $\nu^*$ follows from using $P^*$ in place of $P$. 
%
\end{proof}

\begin{remark}
\label{exa:not_polymatroid}
{\rm 
Proposition~\ref{tech} shows that if $P= (E, \rho)$ is a $(q,m)$-{polymatroid}, then the conullity function $\nu^*$ of $P$ is a monotonically increasing function on $\Sigma(E)$ (ordered by inclusion of subspaces of $E$) and it takes values ranging from ${\nu}^*(\{\mathbf{0}\})=0$ to ${\nu}^*(E)=
\rho(E)-\rho(E^{\perp})=\rho(E)-\rho(\{\mathbf{0}\})=\rho(E) = \rk P.$ However, unlike in the case of usual matroids, there is no ``discrete intermediate value theorem" saying that every integer value between $0$ and $\rho(E)$ is
attained as the conullity of some subspace of $E$. Moreover, although Proposition~\ref{tech} shows that the pairs $(E, \nu)$ and $(E, \nu^*)$ satisfy the axioms (R1) and (R2) in the definition of a $(q,m)$-{polymatroid}, 
neither of these are, in general, $(q,m)$-{polymatroids}. To see these two assertions, it suffices to consider the uniform $(q,m)$-polymatroid $U(1,2)$.   Indeed, in this case $U(1,2) = (E,  \rho)$, where $E= \Fq^2$, and it is easily seen that for any subspace $X$ of $E$, 
$$
\rho(X) = \begin{cases} 0 & \text{if } X= \{\mathbf{0}\}, \\ m &  \text{if } X \ne \{\mathbf{0}\} \end{cases} \quad \text{and} \quad 
\nu(X) = \nu^*(X) = \begin{cases} 0 & \text{if } X \ne E, \\ m &  \text{if } X = E. \end{cases} 
$$
Thus, a ``discrete intermediate value theorem" does not hold for $\nu$ as well as for $\nu^*$ if $m>1$. Furthermore, if $X,Y$ are distinct $1$-dimensional subspaces of $E= \Fq^2$, then $X+Y = E$ and $X+Y=\{\mathbf{0}\} $, and hence
$$
\nu(X+Y) + \nu(X\cap Y) = m  \not\le 0 = \nu(X) + \nu(Y).
$$
It follows that neither $(E, \nu)$ nor $(E, \nu^*)$ is a $(q,m)$-polymatroid.  
}  
\end{remark}

%

\section{Wei duality of $(q,m)$-polymatroids}
\label{sec3}

The following definition for the generalized weights of a $(q,m)$-polymatroid appears to be natural. 

\begin{definition}
{\rm
Let $P=(E,\rho)$ be a $(q,m)$-polymatroid and let $K= \rk P$. 
For $r=1,\dots,K$, the 
\emph{$r$th generalized  weight} of $P$ is 
defined by 
$$
d_r(P)=\min \{\dim X : X\in \Sigma (E) \text{ with }  {\nu}^*(X) \ge r \}.
$$
}
\end{definition}

Here are some simple properties of generalized weights of $(q,m)$-polymatroids. 

\begin{proposition} \label{WeakMono}
Let $P=(E,\rho)$ be a $(q,m)$-polymatroid and let $K= \rk P$. Then 
$$
1\le d_r(P) \le d_{r+1}(P) \le n \quad \text{for } 1\le r < K. 
$$
\end{proposition}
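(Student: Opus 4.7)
The plan is to derive all three inequalities directly from the definition of $d_r(P)$, using the monotonicity of $\nu^*$ from Proposition~\ref{tech}(a) together with the boundary values $\nu^*(\{\mathbf{0}\})=0$ and $\nu^*(E)=\rho(E)=K$ (as recorded in Remark~\ref{exa:not_polymatroid}).

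First I would check well-definedness. For any $r$ with $1\le r\le K$, the space $E$ itself lies in $\{X\in\Sigma(E):\nu^*(X)\ge r\}$ because $\nu^*(E)=K\ge r$, so the set is nonempty and the minimum defining $d_r(P)$ exists. Moreover, since $\dim E=n$, the same observation immediately gives $d_{r+1}(P)\le n$ whenever $r+1\le K$, i.e., for $1\le r<K$.

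Next I would establish the lower bound $d_r(P)\ge 1$. Since $\nu^*(\{\mathbf{0}\})=\rho(E)-\rho(E^\perp)=\rho(E)-\rho(\{\mathbf{0}\})-\cdots$, well, more simply $\nu^*(\{\mathbf{0}\})=m\cdot 0-\rho^*(\{\mathbf{0}\})=0$. Hence any $X$ with $\nu^*(X)\ge r\ge 1$ must satisfy $X\ne\{\mathbf{0}\}$, so $\dim X\ge 1$; taking the minimum over such $X$ yields $d_r(P)\ge 1$.

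Finally, I would prove the monotonicity $d_r(P)\le d_{r+1}(P)$. This is purely set-theoretic: the set of subspaces $X\in\Sigma(E)$ with $\nu^*(X)\ge r+1$ is contained in the set of subspaces $X\in\Sigma(E)$ with $\nu^*(X)\ge r$, so the minimum of $\dim X$ over the smaller set is at least the minimum over the larger set. No appeal to the polymatroid axioms (R1)--(R3) beyond what was used to define $\nu^*$ is needed for this step. There is no real obstacle; the only mildly subtle point is making sure the defining set is nonempty so that the minimum is finite, which is handled in the first step via $\nu^*(E)=K$.
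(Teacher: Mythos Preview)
Your proof is correct and follows essentially the same approach as the paper's: both use $\nu^*(\{\mathbf{0}\})=0$ for the lower bound, the fact that $E$ witnesses nonemptiness (hence $d_r(P)\le n$), and the set-inclusion argument for monotonicity. The paper's version is simply more terse; your added remark on well-definedness is a nice clarification, though the self-correcting aside in the middle of the $\nu^*(\{\mathbf{0}\})=0$ computation should be cleaned up.
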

\begin{proof}
Since ${\nu}^*(\{\mathbf{0}\})=0$, it is clear that $1\le d_r(P) \le n$ for $1\le r\le K$. Next, if $1\le r < K$ and if $d_{r+1}(P) = \dim Y$ for some $Y\in \Sigma(E)$ with $\nu^*(Y)\ge r+1$, then $\nu^*(Y) \ge r$, and so by definition, $d_r(P) \le \dim Y = d_{r+1}(P)$. 
\end{proof}

Unlike the generalized Hamming weights of linear codes, strict monotonicity may not hold for generalized weights of $(q,m)$-polymatroids, i.e., we may not have $d_r(P) < d_{r+1}(P)$ for $1\le r < K$. For example, if $K>n$, then  Proposition~\ref{WeakMono} implies that $d_r(P) = d_{r+1}(P)$ for some $r< K$. 
However, 
we will show that $d_r(P) < d_s(P)$ for $1\le r < s\le K$, provided $s-r\ge m$. First, we need some preliminary results. 

\begin{lemma}\label{hx}
Let $P=(E,\rho)$ be a $(q,m)$-polymatroid and let $K= \rk P$. 
Define 
\begin{eqnarray*}
h(x) &=& \max\{\nu(X) : X\in \Sigma (E) \text{ with } \dim X = x\},  \text{ and} \\ 
h^*(x) &=& \max\{\nu^*(X) : X\in \Sigma (E) \text{ with } \dim X = x\} \quad \text{for } x=0,1, \dots , n.
\end{eqnarray*}
Now fix a positive integer $x \le n$. 
Then $h^*(x-1) \le  h^*(x)$ and 
for $1\le r\le K$,
\begin{equation}\label{drP}
x = d_r(P) \Longleftrightarrow h^*(x-1) < r \le h^*(x) 
\end{equation}
In particular, $x$ is a generalized weight of $P$ if and only if $h^*(x-1) <  h^*(x)$. Also,  $h(x-1) \le  h(x)$ and if $P^*$ is the dual of $P$, then  for $1\le s \le \rk P^*=mn-K$, 
\begin{equation}\label{dsPstar}
x = d_s(P^*) \Longleftrightarrow h(x-1) < s \le h(x)
\end{equation}
In particular, $x$ is a generalized weight of $P^*$ if and only if $h(x-1) <  h(x)$.
\end{lemma}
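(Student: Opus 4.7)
The plan is to prove the assertions for $P$ first, then deduce the analogous statements for $P^*$ via a short duality observation.

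Monotonicity of $h^*$ is immediate from Proposition~\ref{tech}(a): pick $Y$ of dimension $x-1$ with $\nu^*(Y) = h^*(x-1)$, extend it to some $X \supseteq Y$ of dimension $x$, and obtain $h^*(x) \ge \nu^*(X) \ge \nu^*(Y) = h^*(x-1)$. The same argument with $\nu$ in place of $\nu^*$ yields $h(x-1) \le h(x)$.

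For the equivalence \eqref{drP}, I would unfold the definition $d_r(P) = \min\{\dim X : \nu^*(X) \ge r\}$ in both directions. If $x = d_r(P)$, then some $x$-dimensional subspace $X$ witnesses $\nu^*(X) \ge r$, so $h^*(x) \ge r$; and no subspace of dimension $\le x-1$ witnesses $\nu^* \ge r$, so $h^*(x-1) < r$. Conversely, $h^*(x) \ge r$ forces $d_r(P) \le x$, while $h^*(y) \le h^*(x-1) < r$ for every $y \le x-1$ (by the just-proved monotonicity) forces $d_r(P) \ge x$. The \emph{in particular} part is trivial in one direction; in the other, given $h^*(x-1) < h^*(x)$, one takes $r := h^*(x)$, which lies in $\{1,\dots,K\}$ because $h^*$ is nonnegative, monotone, and $h^*(n) = \nu^*(E) = \rho(E) = K$.

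The statements for $P^*$ will follow by applying the first half of the lemma to $P^*$ itself, once one notes that the conullity function of $P^*$ equals the nullity function of $P$: unfolding $\nu^*_{P^*}(X) = \rho^*(E) - \rho^*(X^\perp)$ with the formula defining $\rho^*$ and using $(X^\perp)^\perp = X$ collapses to $m\dim X - \rho(X) = \nu(X)$. Consequently the role of $h^*$ for $P^*$ is played by $h$, and \eqref{dsPstar} is simply \eqref{drP} transcribed for $P^*$, with the index bound $1 \le s \le mn - K$ coming from $\rk P^* = mn - K$. I do not foresee any substantive obstacle; the only care needed is to keep the two layers of duality — nullity versus conullity on the one hand, and $P$ versus $P^*$ on the other — from getting tangled.
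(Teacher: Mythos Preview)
Your proposal is correct and follows essentially the same route as the paper: monotonicity of $h^*$ (and $h$) via Proposition~\ref{tech}(a), then unpacking the definition of $d_r(P)$ in both directions, and finally deducing \eqref{dsPstar} by replacing $P$ with $P^*$. The only cosmetic differences are that you invoke the already-established monotonicity of $h^*$ directly in the converse (where the paper instead enlarges a witnessing subspace and appeals to Proposition~\ref{tech}(a) again), and that you spell out the identity $\nu^*_{P^*}=\nu$ which the paper leaves implicit in the phrase ``applying \eqref{drP} to $P^*$ in place of $P$.''
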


\begin{proof}
Let $x\in \mathbb{N}_0$ 
with $1\le x \le n$. If $X\in \Sigma(E)$ is such that $\dim X = x-1$ and $h^*(x-1) = \nu^*(X)$, then by taking $Y\in \Sigma (E)$ with $\dim Y = x$ and $X \subset Y$, we see from Proposition~\ref{tech} (a) that 
$\nu^*(X) \le \nu^*(Y) \le h^*(x)$. Thus, $h^*(x-1) \le  h^*(x)$. Similarly, $h(x-1) \le  h(x)$. Now let 
$r\in \mathbb{N}_0$ 
with $1\le r\le K$.

First, suppose $x=d_r(P)$. 
Then $x = \dim Y$ for some $Y\in \Sigma (E)$ with $\nu^*(Y) \ge r$. This implies that $h^*(x) \ge r$. 
Moreover, since $x=d_r(P)$, we see that $\nu^*(X) < r$
for every $X\in \Sigma (E)$ with $\dim X = x-1$. This implies that $h^*(x-1) < r$. 

Conversely, suppose $h^*(x-1) < r \le h^*(x)$. Choose $Y\in \Sigma (E)$ with $\dim Y = x$ such that $h^*(x) = \nu^*(Y)$. Then $\nu^*(Y) = h^*(x) \ge r$ and so $d_r(P)\le x$. 
Suppose, if possible, $d_r(P)\le x-1$. Then  there is $Z\in \Sigma(E)$ with $\dim Z = d_r(P) \le x-1$ and $\nu^*(Z) \ge r$. Enlarge $Z$ to a subspace $X$ of $E$ such that $\dim X = x-1$. In view of Proposition~\ref{tech} (a), we  obtain  $h^*(x-1) \ge \nu^*(X) \ge \nu^*(Z) \ge r $, which contradicts the assumption  $h^*(x-1) < r $. This shows that $x=d_r(P)$. Thus \eqref{drP} is proved. 

The equivalence \eqref{dsPstar} follows by applying \eqref{drP} to $P^*$ in place of $P$. 
\end{proof}

Here is a nice relation between the functions $h$ and $h^*$ defined in Lemma~\ref{hx}. 

\begin{lemma}\label{hhstar}
Let $P=(E,\rho)$ be a $(q,m)$-polymatroid and let $K= \rk P$. 
Then 
\begin{equation}\label{basich}
h^*(x) = h(n-x) - m (n-x) + K \quad \text{for } x=0,1, \dots , n.
\end{equation}
Consequently, 
\begin{equation}\label{diffh}
h(n+1-x) - h(n-x) = m - \left( h^*(x) - h^*(x-1) \right) \quad \text{for } x=1, \dots , n.
\end{equation}
In particular, $0 \le h^*(x) - h^*(x-1) \le m$ for $x=1, \dots , n$. 
\end{lemma}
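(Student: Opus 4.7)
The plan is to prove the identity \eqref{basich} by translating the definition of $h^*(x)$ into a statement about subspaces of the complementary dimension via the duality $X\mapsto X^\perp$, and then deducing \eqref{diffh} by a routine subtraction.

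First I would unfold the definition of $h^*$ using the formula $\nu^*(X)=\rho(E)-\rho(X^\perp)=K-\rho(X^\perp)$ from the definition of the conullity function. Since the map $X\mapsto X^\perp$ is an involution on $\Sigma(E)$ that sends subspaces of dimension $x$ to subspaces of dimension $n-x$, maximising $\nu^*(X)$ over $x$-dimensional $X$ is the same as minimising $\rho(Y)$ over $(n-x)$-dimensional $Y$. Thus
\[
h^*(x)=K-\min\{\rho(Y):Y\in\Sigma(E),\ \dim Y=n-x\}.
\]
Next I would rewrite $\rho(Y)$ as $m\dim Y-\nu(Y)=m(n-x)-\nu(Y)$, so that minimising $\rho(Y)$ over $(n-x)$-dimensional $Y$ is the same as subtracting from $m(n-x)$ the quantity $\max\{\nu(Y):\dim Y=n-x\}=h(n-x)$. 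Substituting back gives
\[
h^*(x)=K-\bigl(m(n-x)-h(n-x)\bigr)=h(n-x)-m(n-x)+K,
\]
which is \eqref{basich}.

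For \eqref{diffh}, I would simply apply \eqref{basich} with $x$ replaced by $x-1$ and subtract, giving
\[
h^*(x)-h^*(x-1)=\bigl(h(n-x)-h(n+1-x)\bigr)+m,
\]
i.e., $h(n+1-x)-h(n-x)=m-(h^*(x)-h^*(x-1))$. Finally, the bounds $0\le h^*(x)-h^*(x-1)\le m$ follow at once: the lower bound is the monotonicity $h^*(x-1)\le h^*(x)$ already established in Lemma~\ref{hx}, while the upper bound follows from \eqref{diffh} together with the monotonicity $h(n-x)\le h(n+1-x)$, also from Lemma~\ref{hx}, which forces the right-hand side $m-(h^*(x)-h^*(x-1))$ to be nonnegative.

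No step looks genuinely difficult here; the main thing to be careful about is correctly bookkeeping the substitution $x\leftrightarrow n-x$ under the duality and checking that the bijection $X\leftrightarrow X^\perp$ really preserves the relevant extremum, which it does because $(X^\perp)^\perp=X$. After that the final inequalities are immediate consequences of monotonicity.
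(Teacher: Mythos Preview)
Your proof is correct and follows essentially the same approach as the paper: both arguments exploit the bijection $X\mapsto X^\perp$ together with the identities relating $\nu$, $\nu^*$, and $\rho$ to obtain \eqref{basich}, and then derive \eqref{diffh} and the bounds by subtraction and the monotonicity from Lemma~\ref{hx}. The only cosmetic difference is that the paper writes down the pointwise identity $\nu^*(X)=\nu(X^\perp)-m(n-\dim X)+K$ and takes the maximum directly, whereas you pass through the intermediate step $h^*(x)=K-\min\{\rho(Y):\dim Y=n-x\}$; these are equivalent rearrangements of the same computation.
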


\begin{proof} 
Given any $X\in \Sigma(E)$, note that $\nu (X^\perp) = m \dim X^\perp - \rho(X^\perp)$, and hence
$\nu (X^\perp) + m \dim X = mn - \rho(X^\perp) = mn -  \left( \rho(E) - \nu^*(X) \right)$. 
It follows that 
$$
\nu^*(X) = \nu (X^\perp) - m (n - \dim X) + K.
$$
Taking maximum as $X$ varies over elements of $\Sigma(E)$ with $\dim X = x$, we obtain \eqref{basich}. 
Now \eqref{basich} implies that $h^*(x) - h^*(x-1) = h(n-x) - h(n+1-x) + m$ for $x=1, \dots , n$, and this  yields \eqref{diffh}. 
Further, since $h^*(x-1) \le h^*(x)$ and $h(n-x) \le h(n+1-x) $, thanks to Lemma~\ref{hx}, we also obtain 
$0 \le h^*(x) - h^*(x-1) \le m$ for $x=1, \dots , n$.
\end{proof}

\begin{corollary}\label{Mono} 
Let $P=(E,\rho)$ be a $(q,m)$-polymatroid and let $K= \rk P$. 
Then 
$$
d_r(P) < d_{r+m}(P) \quad \text{for any positive integer $r$ such that } r+m \le K.
$$
and
$$
d_s(P^*) < d_{s+m}(P^*) \quad \text{for any positive integer $s$ such that } s+m \le mn-K.
$$
\end{corollary}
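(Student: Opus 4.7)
The plan is to argue by contradiction using the characterization \eqref{drP} from Lemma~\ref{hx} and the bounded-jump inequality $0 \le h^*(x)-h^*(x-1) \le m$ from Lemma~\ref{hhstar}. The intuition is that since the conullity-maximum function $h^*$ grows by at most $m$ each time $\dim X$ increases by one, consecutive generalized weights indexed by $r$ and $r+m$ cannot collapse onto a single value of $\dim X$.

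More precisely, let $r$ be a positive integer with $r+m \le K$, so that both $d_r(P)$ and $d_{r+m}(P)$ are defined. Proposition~\ref{WeakMono} already yields $d_r(P) \le d_{r+m}(P)$, so it suffices to rule out equality. Suppose, toward a contradiction, that $d_r(P) = d_{r+m}(P) = x$ for some $x$ with $1 \le x \le n$. Applying \eqref{drP} twice, once for $r$ and once for $r+m$, gives $h^*(x-1) < r$ and $r+m \le h^*(x)$; subtracting produces $h^*(x) - h^*(x-1) \ge (r+m) - (r-1) = m+1$, contradicting the bound $h^*(x) - h^*(x-1) \le m$ from Lemma~\ref{hhstar}.

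For the statement about $P^*$, I would carry out the identical argument, but using \eqref{dsPstar} in place of \eqref{drP}. This requires the analogous bound $0 \le h(x) - h(x-1) \le m$ for $x = 1,\dots,n$; this is available from Lemma~\ref{hhstar}, for instance by applying the relation \eqref{diffh} after a reindexing $y = n+1-x$, since the right-hand side $m - (h^*(x)-h^*(x-1))$ lies in $[0,m]$ precisely because $h^*$ has $m$-bounded increments. Alternatively, one can just invoke Lemma~\ref{hhstar} for the polymatroid $P^*$ in place of $P$, whose nullity function equals the conullity function of $P$. Either way, the same contradiction argument then forces $d_s(P^*) < d_{s+m}(P^*)$ whenever $s+m \le mn - K = \rk P^*$.

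There is no real obstacle here: once the equivalences in Lemma~\ref{hx} and the bounded-jump property in Lemma~\ref{hhstar} are in hand, the proof is essentially a one-line pigeonhole. The only mild subtlety is checking that $d_{r+m}(P)$ is well-defined under the hypothesis $r+m \le K$ and that the $P^*$ case reduces cleanly to the bound on jumps of $h$, which I would handle by explicitly invoking the duality built into Lemma~\ref{hhstar}.
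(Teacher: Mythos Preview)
Your proof is correct and essentially identical to the paper's: both argue by contradiction, use Proposition~\ref{WeakMono} for weak monotonicity, apply \eqref{drP} at $r$ and $r+m$ to force $h^*(x)-h^*(x-1)\ge m+1$, and then contradict the bound from Lemma~\ref{hhstar}. For the dual statement the paper simply writes ``replacing $P$ by $P^*$,'' which is exactly your second alternative.
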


\begin{proof}
Let $r$ be a positive integer with $r+m \le K$. 
Then $d_r(P) \le d_{r+m}(P)$, by Proposition~\ref{WeakMono}. 
Suppose, if possible, $d_r(P) = d_{r+m}(P) = x$, say. Then by \eqref{drP} in Lemma~\ref{hx},
$h^*(x-1) < r$ and $h^*(x)\ge r+m$. Consequently,  $h^*(x) - h^*(x-1) \ge m+1$. This contradicts the last assertion in Lemma~\ref{hhstar}. Thus $d_r(P) < d_{r+m}(P)$. Replacing $P$ by $P^*$, we obtain the desired inequality for the generalized weights of $P^*$. 
\end{proof}

%

We shall now proceed to establish a version of Wei duality for the generalized weights of $(q,m)$-polymatroids. Recall that if $C$ is a $[n,k]_q$-code, and $d_1, \dots, d_k$ are the generalized Hamming weights (GHW) of $C$ and $d_1^\perp, \dots , d_{n-k}^\perp$ are the GHW of the dual of $C$, then Wei duality states that the values
$$
n+1 - d_1, \dots , n+1-d_k \quad \text{and} \quad d_1^\perp, \dots , d_{n-k}^\perp
$$
are all distinct and their union is precisely the set $\{1, \dots , n\}$. In the setting of a polymatroid $P=(E, \rho)$ of rank $K$,  the generalized weights of $P$ and its dual $P^*$ lie between 1 and $n$, and we can similarly consider
$$
n+1 - d_1(P), \dots , n+1-d_K(P) \quad \text{and} \quad d_1(P^*), \dots , d_{mn-K}(P^*).
$$
But these  $mn$ values would not constitute $\{1, \dots , mn\}$ when $m\ge 2$, since  they lie between $1$ and $n$.  But one could ask for some ``$m$-fold" version of Wei duality, and that is what we give in the next theorem and the corollary that follows.  These results are 
inspired by the related results of Ravagnani \cite{ravagnani16} and also of Mart\'{\i}nez-Pe\~{n}as and Matsumoto \cite{MM} about generalized weights (GW) and relative GW of Delsarte rank metric codes.   

\begin{theorem} \label{partialWei}
Let $P=(E,\rho)$ be a $(q,m)$-polymatroid of rank $K.$ Also, let  $p,i,j$ be integers such that 
$1\le p+im \le mn-K$ and $1 \le p+K+jm \le K.$
Then 
$$
d_{p+im}(P^*) \ne n+1-d_{p+K+jm}(P).
$$
\end{theorem}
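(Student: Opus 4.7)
The plan is to argue by contradiction, translating the putative equality $d_{p+im}(P^*) = n+1 - d_{p+K+jm}(P)$ into simultaneous inequalities for the functions $h$ and $h^*$ of Lemma~\ref{hx}, and then invoking the identity \eqref{basich} from Lemma~\ref{hhstar} to reach an impossible statement about multiples of $m$.

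First, I would set $r := p+im$ and $s := p+K+jm$, and suppose $d_r(P^*) = n+1-d_s(P)$; call this common value $x$. The hypotheses on $p,i,j$ are precisely $1 \le r \le \rk P^*$ and $1 \le s \le \rk P$, while Proposition~\ref{WeakMono} guarantees $x \in \{1,\dots,n\}$, so Lemma~\ref{hx} applies in both of its forms to give
$$
h(x-1) < r \le h(x) \quad \text{and} \quad h^*(n-x) < s \le h^*(n+1-x).
$$

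Next, I would use $h^*(y) = h(n-y) - m(n-y) + K$ from Lemma~\ref{hhstar} to rewrite $h^*(n-x) = h(x) - mx + K$ and $h^*(n+1-x) = h(x-1) - m(x-1) + K$. Substituting into the second chain and combining with $h(x-1) < r \le h(x)$ from the first chain will yield
$$
m(x-1) \; < \; r - s + K \; < \; mx.
$$
But $r - s + K = (i-j)m$ is an integer multiple of $m$, whereas the open interval $\bigl(m(x-1),\, mx\bigr)$ contains no multiple of $m$. This contradiction finishes the argument.

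The only point requiring any care is verifying that $x$ lies in the range where both $h$ and $h^*$ are defined, which reduces immediately to the bounds $1 \le d_r(P^*), d_s(P) \le n$ supplied by Proposition~\ref{WeakMono}. No substantive obstacle arises, since Lemmas~\ref{hx} and~\ref{hhstar} have already packaged the relevant relations between $h$, $h^*$, and the generalized weights; the content of the theorem is essentially the congruence observation $r - s + K \equiv 0 \pmod m$.
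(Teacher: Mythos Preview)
Your proposal is correct and follows essentially the same route as the paper: both arguments feed the assumed equality through Lemma~\ref{hx} to obtain inequalities on $h$ and $h^*$, invoke the identity \eqref{basich} of Lemma~\ref{hhstar} to convert between them, and then derive the contradiction that a multiple of $m$ lies strictly between two consecutive multiples of $m$. The only differences are cosmetic---you swap the roles of what the paper calls $r$ and $s$, and you bypass the auxiliary quantity $g = h(n+1-x)-h(n-x)$ that the paper introduces, arriving at the inequalities $m(x-1) < r-s+K < mx$ a bit more directly.
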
 

\begin{proof}
Write $r= p+K+jm$, $s= p+im$, and $x=d_r(P)$. Let  $h$ and $h^*$ be as in Lemma~\ref{hx}. 
In view of \eqref{diffh}, let 
$$
g = h(n+1-x) - h(n-x) = m - \left( h^*(x) - h^*(x-1) \right).
$$
Then using \eqref{drP}, we see that 
$$
r \le h^*(x)  \quad \text{and} \quad r + m - g = h^*(x) + \left( r - h^*(x-1)\right) > h^*(x).
$$
Thus $r \le h^*(x) < r+m-g$, and therefore by \eqref{basich}, we obtain
$$
p+ m(j+n-x) = 
r + m(n-x) - K \le h(n-x)  < r+ m(n-x+1) -g -K. 
$$
The second inequality above implies that 
$$
h(n+1-x) = h(n-x) + g <  r+ m(n-x+1) -K = p+ m(j+n - x+1).
$$
Now suppose, if possible, $n+1 -x = d_s(P^*)$. Then by \eqref{dsPstar}, $h(n-x) < s \le h(n+1-x)$. Combining this with the inequalities obtained earlier, we see that 
$$
p+ m(j+n-x) < s < p+ m(j+n-x)+m.
$$
But this contradicts the fact that $s\equiv p\!\pmod m$. 
\end{proof}

\begin{corollary} \label{important}
Let $P=(E,\rho)$ be a $(q,m)$-polymatroid of rank $K$, and let $s$ be an integer 
such that $0\le s < m$. Define 
\begin{eqnarray*}
&& W_s(P) = \{ d_r(P) : r=1, \dots , \rk P \text{ and } r \equiv s \, (\mathrm{mod} \ m) \}, \text{ and } \\
&& \overline{W}_s(P) = \{n+1- d_r(P) : r=1, \dots , \rk P \text{ and } r \equiv s \, (\mathrm{mod} \ m)\}. 
\end{eqnarray*}
Also define $W_s(P^*)$ and  $\overline{W}_{s}(P^*)$ in a similar manner. 
Then 
$$
W_s(P^*)=\{1,2,\cdots,n\} \setminus \overline{W}_{s+^m K}(P),
$$
where $s+^m K$ means the integer in $\{0,1,\cdots,m-1\}$ congruent to $s+K$ modulo~$m$.
\end{corollary}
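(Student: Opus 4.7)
The plan is to show that $W_s(P^*)$ and $\overline{W}_{s+^m K}(P)$ are disjoint subsets of $\{1,2,\dots,n\}$ whose cardinalities sum to exactly $n$; the desired complementary description of $W_s(P^*)$ then follows as a purely set-theoretic corollary.

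First I would establish disjointness as a direct application of Theorem~\ref{partialWei}. Setting $s' = s +^m K$, I would take an arbitrary $d_{r'}(P^*) \in W_s(P^*)$ (so $1 \le r' \le mn-K$ and $r' \equiv s \pmod m$) and an arbitrary $n+1 - d_r(P) \in \overline{W}_{s'}(P)$ (so $1 \le r \le K$ and $r \equiv s + K \pmod m$), and choose any integer $p$ with $p \equiv s \pmod m$. Then $r' = p + im$ for a suitable integer $i$, and since $r - K \equiv s \equiv p \pmod m$, also $r = p + K + jm$ for a suitable integer $j$, bringing the pair exactly into the form required by Theorem~\ref{partialWei}. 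The theorem then yields $d_{r'}(P^*) \ne n+1 - d_r(P)$, establishing $W_s(P^*)\cap \overline{W}_{s'}(P) = \emptyset$.

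Next I would pin down the cardinalities using Corollary~\ref{Mono}. Since $d_r(P^*) < d_{r+m}(P^*)$ whenever $r+m \le mn-K$, the values $d_r(P^*)$ with $r$ ranging over a single residue class mod $m$ are pairwise distinct; an identical argument applies to $d_r(P)$, and hence to $n+1 - d_r(P)$. Therefore
\[
|W_s(P^*)| = \#\{r : 1 \le r \le mn-K,\ r \equiv s \pmod m\} \quad\text{and}\quad |\overline{W}_{s'}(P)| = \#\{r : 1 \le r \le K,\ r \equiv s' \pmod m\}.
\]

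Finally I would carry out a short counting step to see that these two cardinalities sum to $n$: the interval $\{1,\dots,mn\}$ contains exactly $n$ integers in each residue class modulo $m$, and splitting it as $\{1,\dots,mn-K\} \sqcup \{mn-K+1,\dots,mn\}$ and translating the second block via $r = mn-K+j$ converts the residue condition $r \equiv s \pmod m$ into $j \equiv s+K \pmod m$, because $m \mid mn$. Both $W_s(P^*)$ and $\overline{W}_{s'}(P)$ lie in $\{1,\dots,n\}$ by Proposition~\ref{WeakMono}, so disjointness plus matching total size forces the claimed equality. The only real subtlety is the bookkeeping in the first step: aligning the two index parametrizations of Theorem~\ref{partialWei} with the residue classes indexing $W_s$ and $\overline{W}_{s+^m K}$, and recognising that the shift by $K$ is precisely what accounts for the ``$s +^m K$'' correction appearing in the statement.
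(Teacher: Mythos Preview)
Your proof is correct and follows the same overall strategy as the paper: disjointness is obtained from Theorem~\ref{partialWei}, containment in $\{1,\dots,n\}$ from Proposition~\ref{WeakMono}, and the cardinality count relies on the strict monotonicity within a residue class given by Corollary~\ref{Mono}. The one noteworthy difference is in the counting step: the paper splits into the cases $s=0$ versus $s>0$ (and further into subcases according to whether $s+K\equiv 0\pmod m$), computes each cardinality as a floor expression, and verifies that the two lower bounds sum to at least $n$; you instead observe directly that $\{1,\dots,mn\}$ contains exactly $n$ elements in the residue class of $s$, and that the shift $r\mapsto r-(mn-K)$ sends the upper block $\{mn-K+1,\dots,mn\}$ bijectively to $\{1,\dots,K\}$ while converting the congruence $r\equiv s$ into $j\equiv s+K$. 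This yields $|W_s(P^*)|+|\overline{W}_{s+^m K}(P)|=n$ in one stroke, avoiding the case analysis entirely.
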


\begin{proof}
By Theorem~\ref{partialWei}, the sets $W_s(P^*)$ and  $\overline{W}_{s+^m K}(P)$ are disjoint, and by Proposition~\ref{WeakMono}, they are subsets of $\{1,2,\cdots,n\}$. Thus it suffices to show that the sum of their cardinalities is at least $n$
(and therefore exactly $n$). To this end, write $s+K = Am + B$ for integers $A, B$ with $0\le B < m$. Note that $s+^m K = B$. Let us first consider the case 
$s=0$. Here, by the definition of $W_s(P^*)$, and the frequent leaps, guaranteed by Corollary \ref{Mono}, of the $d_{s+jm}(P^*)$ as $j$ increases, we see that: 
$$
|W_s(P^*)| \ge \left\lfloor \frac{mn-K}{m}\right\rfloor =  \left\lfloor \frac{mn-Am -B}{m}\right\rfloor = \begin{cases} n-A & \text{if } B=0, \\ n - A - 1 & \text{if } B\ge 1. \end{cases}
$$
On the other hand,   Corollary~\ref{Mono} also shows that 
$$
|\overline{W}_B(P)| \ge |W_B(P)|= \begin{cases}  \left\lfloor\frac{K}{m}\right\rfloor = A & \text{if } B=0, \\[0.5em] 
1 +  \left\lfloor\frac{K-B}{m}\right\rfloor = 1+A & \text{if } B\ge 1. \end{cases}
$$
Thus, $|W_s(P^*)| + | \overline{W}_{s+^m K}(P)| \ge n$. 
Consequently, $|W_s(P^*)| + | \overline{W}_{s+^m K}(P)| = n$. 
Next, suppose $s>0$. Here, in a similar manner, Corollary~\ref{Mono} shows that 
$$
|W_s(P^*)| \ge 1+ \left\lfloor \frac{mn-K-s}{m}\right\rfloor =  1+ \left\lfloor \frac{mn-Am -B}{m}\right\rfloor = \begin{cases} 1+n-A & \text{if } B=0, \\ n - A  & \text{if } B\ge 1. \end{cases}
$$
and also that 
$$
|\overline{W}_B(P)| = |W_B(P)| \ge \begin{cases}  \left\lfloor\frac{K-B}{m}\right\rfloor = \left\lfloor\frac{Am-s}{m}\right\rfloor 
= A-1 & \text{if } B=0, \\[0.5em] 
1 +  \left\lfloor\frac{K-B}{m}\right\rfloor =  1+ \left\lfloor\frac{Am-s}{m}\right\rfloor = A & \text{if } B\ge 1. \end{cases}
$$
So, once again $|W_s(P^*)| + | \overline{W}_{s+^m K}(P)| \ge n$ (and hence equal to $n$), 
as desired. 
\end{proof}

\begin{remark} \label{reciprocity}
{\rm
The above corollary shows that the generalized weights of a $(q,m)$-polymatroid $P$ and the generalized weights of its dual $P^*$ determine each other. 
Indeed, first we treat only the $d_r(P^*)$ and $d_{r+^mK}(P)$, for $r \equiv s$, for a fixed value of $s$. By Corollary \ref{important} they determine each other. Since this is true for each fixed s, as $s$ varies in $\{0,1,\cdots,m-1\}$, the assertion holds.

We remark also that the proofs given here of Theorem~\ref{partialWei} and the two preceding lemmas  are motivated by the proofs of the corresponding results for usual matroids (see, e.g., 
\cite[Proposisjon 5.18]{larsen05}). Further, our proof of Corollary~\ref{important} uses arguments that are analogous to those in the proof of  \cite[Corollary 38]{ravagnani16}. 
}
\end{remark}

\section{Generalized  Weights of Delsarte Rank Metric Codes}
\label{sec4}

In this and the subsequent sections, we will denote by $\Mat$, or simply by $\M$ the space of all $m\times n $ matrices with entries in the finite field $\Fq$. Note that $\M$ is a vector space over $\Fq$ of dimension $mn$. 

As stated in the Introduction, by a \emph{Delsarte rank metric code},  or simply a \emph{Delsarte code}, we mean a 
$\Fq$-linear subspace of $\M$.  We write $\dim_{\Fq} C$, or simply $\dim C$, to denote the dimension of a Delsarte code $C$. 

Following Shiromoto \cite{S}, we associate to a Delsarte code $C$, a family of subcodes of $C$ indexed by subspaces of $E=\Fq^n$, and a $(q,m)$-polymatroid as follows. 

\begin{definition}\label{CXrhoC}
{\rm 
Let $C$ be a Delsarte code. 
\begin{enumerate}
\item[{\rm (a)}] Given any $X\in \Sigma (E)$, we define $C(X)$ to be the subspace of $C$ consisting of all matrices in $C$  
whose row space is contained in $X$.
\item[{\rm (b)}]By $\rho^{ }_C$ we denote the function from $\Sigma(E)$ to ${\mathbb{N}_0}$ defined by 
\begin{equation}\label{eq:rhoC}
\rho^{ }_C(X)=\dim_{\mathbb{F}_q}C-\dim_{\mathbb{F}_q}C(X^{\perp}) \quad \text{ for } X \in \Sigma(E).
\end{equation} 
Further, by $P(C)$ we denote the $(q,m)$-polymatroid $(E, \rho^{ }_C)$. 
\end{enumerate}
}
\end{definition}  
                          
\begin{remark}\label{nuCstar}
{\rm
It is shown in \cite[Proposition 3]{S} that $P(C)=(E, \rho^{ }_C)$ does indeed satisfy all the axioms of a $(q,m)$-polymatroid. We note also that the conullity function $\nu^*_{C}$ of $P(C)$ is given by 
\begin{equation}\label{nuC}
\nu^*_C (X) = \dim C(X) \quad \text{for } X \in \Sigma(E).
\end{equation}
}
\end{remark}

\begin{example} \label{MRD}
{\rm 
Assume, for simplicity, that $m > n$. Let $C \subseteq \Mat$ be a MRD code of dimension $K$.  (See, for example, \cite[\S\, 2]{ravagnani15} for the definition and basic facts about MRD codes.) Then $C$ is a Delsarte code 
such that $K = \dim C$ is divisible by $m$ and 
$C(X)=\{0\}$ for all subspaces $X$ of $E$ with $\dim X \le n-\frac{K}{m}$. The latter follows, for instance, from \cite[Proposition 6.2]{GJLR}. This shows that $\rho^{ }_C(X) =K$ if $X\in \Sigma(E)$ with $\dim X \ge K/m$. Further, in view of \cite[Theorem 6.4]{GJLR}, we see that  $\rho^{ }_C(X) =m \dim X$ if $X\in \Sigma(E)$ with $\dim X \le K/m$. 
It follows that $P(C)$ is the uniform matroid $U(\frac{K}{m},n)$. 
}
\end{example}


In general we loosely think of $X$ as (containing) the ``support" of the code $C(X)$, and $C(X)$ as ``the subcode of $C$ supported on $X$".
This gives rise to the idea of the $r$th generalized weight as the smallest dimension of a subspace that can 
support an $r$-dimensional subcode, in analogy with the smallest cardinality of a subset that can 
support an $r$-dimensional subcode, for a block code with Hamming metric. In other words we define:

\begin{definition}\label{Def-drC}
{\rm 
Let $C$ be a Delsarte code and let $K=\dim C$. For $r=1,\cdots,K$, 
the $r$th \emph{generalized  weight} of $C$ is defined by 
$$
d_r(C)=\min \{\dim X : X\in \Sigma(E) \text{ with }   \dim C(X) \ge r 
\}.
$$
}
\end{definition}

\begin{remark}
{\rm This definition is in harmony 
with the one given by 
Mart\'{\i}nez-Pe\~{n}as and Matsumoto  \cite{MM} of the $r$th \emph{relative generalized weight}:
$$
d_{\M,r}(C_1,C_2)=\min \{\dim X : X\in \Sigma(E) \text{ with } 
\dim C_1(X)-\dim C_2(X) \ge r 
\}
$$ 
for pairs of Delsarte codes $C_2 \subseteq C_1$. Indeed, our $d_r(C)$
coincides with their $d_{\M,r}(C,\{0\}).$ See also their Appendix A, where duality theory for a single code is treated.}
%
\end{remark}

We then have defined generalized  weights for $(q,m)$-polymatroids as well as Delsarte codes, in a way intended to
give the following result as a consequence:

\begin{theorem}
Let $C$ be a Delsarte code and let $P(C) = (E, \rho^{ }_C)$ be the corresponding $(q,m)$-polymatroid. Then 
$\dim C = \rk P(C)$ and 
$$
d_r(C)=d_r(P(C)) \quad \text{for } r=1, \dots, \dim C.
$$
\end{theorem}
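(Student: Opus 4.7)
The plan is to prove both assertions by directly unfolding the definitions, since the definition of $P(C)$ was engineered precisely so that the dual-style rank function $\rho_C$ records codimensions of the subcodes $C(X^\perp)$, and hence the conullity function recovers the dimensions of the subcodes $C(X)$ themselves.

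First I would compute $\rk P(C) = \rho_C(E)$. By definition,
\[
\rho_C(E) = \dim C - \dim C(E^\perp) = \dim C - \dim C(\{\mathbf{0}\}).
\]
The subspace $C(\{\mathbf{0}\})$ consists of matrices in $C$ whose row space is contained in the zero subspace, so it is the trivial subspace; hence $\rk P(C) = \dim C$, giving the first equality.

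Next I would show that the conullity function $\nusF$ of $P(C)$ satisfies $\nusF(X) = \dim C(X)$ for every $X \in \Sigma(E)$. Starting from the definition,
\[
\nu^*_C(X) = \rho_C(E) - \rho_C(X^\perp) = \dim C - \bigl(\dim C - \dim C((X^\perp)^\perp)\bigr) = \dim C(X),
\]
using $(X^\perp)^\perp = X$. This is exactly the identity recorded in Remark~\ref{nuCstar}, so in a more polished writeup I would just invoke that remark. With this in hand, the definition of the generalized weights of a polymatroid gives
\[
d_r(P(C)) = \min\{\dim X : X \in \Sigma(E) \text{ with } \nu^*_C(X) \ge r\} = \min\{\dim X : X \in \Sigma(E) \text{ with } \dim C(X) \ge r\},
\]
which is precisely $d_r(C)$ as in Definition~\ref{Def-drC}.

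There is really no main obstacle here: the whole theorem is a formal verification that the conullity of the polymatroid $P(C)$ equals the dimension of the subcode $C(X)$, a fact that is already essentially stated in Remark~\ref{nuCstar}. The only mild care needed is to note that the duality $(X^\perp)^\perp = X$ cleanly converts the definition of $\rho_C$ (phrased via $X^\perp$) into the definition of $d_r(C)$ (phrased via $X$), and that $C(\{\mathbf{0}\}) = \{0\}$ so that the rank of $P(C)$ coincides with $\dim C$.
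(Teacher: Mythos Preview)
Your proof is correct and follows essentially the same route as the paper: both compute $\rho_C(E)=\dim C$ (you are slightly more explicit about why $C(\{\mathbf{0}\})=\{0\}$), then identify the conullity $\nu^*_C(X)$ with $\dim C(X)$ via $(X^\perp)^\perp=X$, and read off the equality of generalized weights from the definitions. One cosmetic point: you momentarily write $\nusF$ (the flag conullity) before switching to $\nu^*_C$; keep the notation consistent with $P(C)$.
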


\begin{proof}  
Clearly, $\rho^{ }_C (E) = \dim C$. Thus $\dim C = \rk P(C)$ and for any $X\in \Sigma(E)$, 
$$
\dim C(X) = \dim C -  \rho^{ }_C (X^\perp) =   \rho^{ }_C (E) -  \rho^{ }_C(X^\perp) = \nu^*(X),
$$
where $\nu^*$ denotes the conullity function of $P(C)$. It follows that
$$
d_r(C) = \min \{\dim X : X\in \Sigma (E) \text{ with }  {\nu}^*(X) \ge r \} = d_r(P(C))
$$
for any $r=1, \dots , \dim C$. 
\end{proof}

\subsection{Duality of Delsarte rank metric codes}

As indicated in the Introduction, the notion of dual for Delsarte reank metric codes is defined using the trace product. 
See for example \cite[Definition 34]{ravagnani16}. We record the basic definition below. 

\begin{definition}\label{def:tracedual}
{\rm 
Let $C$ be a Delsarte code. The  \emph{trace dual}, or simply the  \emph{dual}, of $C$ is the Delsarte code  $C^{\perp}$ defined by 
$$
C^\perp = \{ N \in \Mat : \mathrm{Trace}(MN^t) =0 \text{ for all } M \in C\},
$$
where $N^t$ denotes the transpose of a $m\times n$ matrix $N$ and, as usual, $\mathrm{Trace}(MN^t)$ is the trace of the square matrix $MN^t$, i.e., the sum of all its diagonal entries. 
}
\end{definition}

There is a natural connection between duals of Delsarte codes and the duals of $(q,m)$-polymatroids. It is shown by 
Shiromoto \cite{S} as well as Gorla, Jurrius, Lopez and Ravagnani \cite{GJLR}, and we record it below. 

\begin{theorem} \label{commutes} \cite[Proposition 11]{S}
Let $C$ be a Delsarte code. Then
\[P(C^{\perp})=P(C)^*.\]
\end{theorem}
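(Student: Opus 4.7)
The plan is to unwind both sides of $P(C^\perp) = P(C)^*$ into explicit formulas for their rank functions and then match them via a single dimension identity.

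First, I would compute $P(C)^*$ explicitly. Since $\rho^{ }_C(E) = \dim C - \dim C(\{\mathbf{0}\}) = \dim C$, the dual rank function simplifies to
\[
\rho^*_C(X) = \rho^{ }_C(X^\perp) + m\dim X - \dim C = m\dim X - \dim C(X),
\]
using that $\rho^{ }_C(X^\perp) = \dim C - \dim C(X)$. On the other side, by definition
\[
\rho^{ }_{C^\perp}(X) = \dim C^\perp - \dim C^\perp(X^\perp) = mn - \dim C - \dim C^\perp(X^\perp).
\]
So the theorem reduces to showing, for every $X \in \Sigma(E)$, the identity
\[
\dim C(X) + m\dim X^\perp \;=\; \dim C + \dim C^\perp(X^\perp).
\]

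The key step is to introduce the ambient subspace $\M(X) \subseteq \M$ of matrices whose row space lies in $X$, so that $C(X) = C \cap \M(X)$ and $\dim \M(X) = m\dim X$. Under the trace pairing $(A,B) \mapsto \mathrm{Trace}(AB^t)$, I claim $\M(X)^\perp = \M(X^\perp)$. The inclusion $\M(X^\perp) \subseteq \M(X)^\perp$ is immediate because $\mathrm{Trace}(AB^t) = \sum_i \mathrm{row}_i(A)\cdot\mathrm{row}_i(B)$ vanishes whenever the rows of $A$ lie in $X$ and those of $B$ lie in $X^\perp$; equality then follows from a dimension count, since both spaces have dimension $m\dim X^\perp = mn - m\dim X$.

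With this in hand, the standard duality formula $\dim(U\cap V) = \dim U + \dim V - \dim(U+V)$ together with $(U+V)^\perp = U^\perp \cap V^\perp$ applied to $U = C$ and $V = \M(X)$ gives
\[
\dim C(X) = \dim C + m\dim X - mn + \dim\bigl(C^\perp \cap \M(X^\perp)\bigr),
\]
which is exactly the required identity after noting $mn - m\dim X = m\dim X^\perp$ and $C^\perp \cap \M(X^\perp) = C^\perp(X^\perp)$. Substituting back shows $\rho^{ }_{C^\perp}(X) = \rho^*_C(X)$ for all $X$, proving $P(C^\perp) = P(C)^*$.

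The main obstacle, and really the only nontrivial point, is the compatibility $\M(X)^\perp = \M(X^\perp)$ of the trace pairing with the row-support filtration. Once that is verified, everything else is a bookkeeping exercise with dimensions, and no new polymatroid axioms need to be checked because $P(C)^*$ is already known to be a $(q,m)$-polymatroid by Proposition~\ref{dualdef}.
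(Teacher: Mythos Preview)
Your argument is correct. The reduction to the identity $\dim C(X) + m\dim X^\perp = \dim C + \dim C^\perp(X^\perp)$ is exactly right, and your proof of it via $\M(X)^\perp = \M(X^\perp)$ under the trace pairing, followed by the dimension formula for $C\cap \M(X)$, is clean and complete.

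As for comparison: the paper does not supply its own proof here --- it simply cites \cite[Proposition 11]{S} and \cite[Theorem 8.1]{GJLR} and remarks that the argument is ``quite short and natural.'' Your proof is precisely that short and natural argument. The key compatibility $\M(X)^\perp = \M(X^\perp)$ that you isolate also appears later in the paper (in Example~\ref{bigexample}), and the fact $\dim \M(X) = m\dim X$ is noted in the proof of Lemma~\ref{Lem2F}, so your ingredients are consistent with what the authors use elsewhere. There is no meaningful methodological difference to report.
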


The proof is quite short and natural and given in \cite[Proposition 11]{S}, and also in \cite[Theorem 8.1]{GJLR}.
Note that this gives another proof of \cite[Proposition 65]{MM}.

\begin{corollary} \label{codewei}
If we know all the generalized weights of a Delsarte rank-metric code $C$, then the generalized  weights of $C^{\perp}$ are uniquely determined (and vice versa). Moreover these ordered sets of generalized  weights determine each other in a way described by Theorem \ref{partialWei}, if one substitutes the codes in question with their respective polymatroids.
\end{corollary}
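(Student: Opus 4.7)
The plan is to combine three ingredients that have already been established in the excerpt: the identity $d_r(C)=d_r(P(C))$ translating generalized weights of a Delsarte code into those of its associated polymatroid, the duality compatibility $P(C^{\perp})=P(C)^*$ from Theorem~\ref{commutes}, and the polymatroid Wei duality encoded in Theorem~\ref{partialWei} (refined by Corollary~\ref{important}). Since all three are in place, no new combinatorial arguments are needed; the role of the proof is simply to align the indices and transfer the polymatroid statement to codes.

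First, I would apply the identification $d_r(C)=d_r(P(C))$ to both $C$ and $C^{\perp}$, noting that $\dim C=\rk P(C)$ and $\dim C^{\perp}=mn-\dim C=\rk P(C)^{*}$ (the latter using that dimensions of trace-dual codes add to $mn$, which is consistent with the rank formula $\rk P^{*}=mn-\rk P$ recorded after Proposition~\ref{dualdef}). Thus
\[
d_r(C)=d_r(P(C)) \text{ for } 1\le r \le \rk P(C), \qquad d_s(C^{\perp})=d_s(P(C^{\perp})) \text{ for } 1\le s \le mn-\rk P(C).
\]
Next, I would invoke Theorem~\ref{commutes} to rewrite $d_s(P(C^{\perp}))=d_s(P(C)^{*})$, so that both sequences of generalized weights are now generalized weights of a single polymatroid $P=P(C)$ and its dual $P^{*}$.

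At this point Theorem~\ref{partialWei} applied to $P=P(C)$ with $K=\rk P(C)=\dim C$ gives exactly the asserted non-coincidence relations between $d_{p+im}(P^{*})$ and $n+1-d_{p+K+jm}(P)$; translating through the identifications just made, these become non-coincidences between $d_{p+im}(C^{\perp})$ and $n+1-d_{p+K+jm}(C)$. Corollary~\ref{important}, with the congruence classes $W_s$ and $\overline{W}_{s+^m K}$, then shows that within each residue class modulo $m$ the two sets partition $\{1,\dots,n\}$, so knowledge of either sequence determines the other.

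There is no real obstacle here; the only thing to be careful about is the bookkeeping of index ranges and the ``congruence class shift'' $s\mapsto s+^m K$ that appears in Corollary~\ref{important}, which corresponds precisely to the fact that the ranks of $P(C)$ and $P(C)^{*}$ differ by $K$. Once this is handled, the corollary follows as a direct translation of the polymatroid statement through the functorial equality $P(C^{\perp})=P(C)^{*}$.
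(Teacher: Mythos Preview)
Your proposal is correct and follows essentially the same approach as the paper's own proof, which simply cites Theorem~\ref{commutes} together with Remark~\ref{reciprocity} (the latter being a restatement of how Corollary~\ref{important} lets the weights of $P$ and $P^*$ determine each other). You have merely unpacked these references in more detail, including the preliminary identification $d_r(C)=d_r(P(C))$, which the paper leaves implicit.
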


\begin{proof}
Follows from Theorem \ref{commutes} and Remark \ref{reciprocity}.
\end{proof}

\begin{example}
{\rm Assume that $m>n$. 
Let $C$ be an MRD code of dimension $K=mK'$. As we saw in Example \ref{MRD}, $P(C)=U(K',n)$.
Then $P(C^{\perp})=P(C)^*=U(n-K',n)$, which translated back to the language of codes gives $\dim C(X)=\{0\}$, for all subspaces 
$X$ of $E$ of dimension at most $K'$. Hence we obtain the well-known fact that if $C$ is an MRD code of dimension $K=mK'$, then $C^{\perp}$ is an MRD-code of dimension $mn-K=m(n-K').$}
\end{example}

\begin{example}
{\rm If  $C$ is a Delsarte code with $\dim C$ is divisible by $m$, as for example for  a Gabidulin code, or an MRD-code, then the conclusion of Corollary \ref{important} is  
$$
W_s(P^*)=\{1,2,\cdots,n\} \setminus \overline{W}_{s}(P) \quad \text{for each $s=1,\dots,m.$}
$$
Consequently, a Delsarte code $C$ with dimension divisible by $m$ is MRD if and only if  
$d_s(C)= n-K'+1$ for each $s=1,\dots,m$  (and then automatically $d_{s+im}(C)=n-K'+i+2$ for  $i=1,\dots, K'-1,$ while $d_s(C^{\perp})=K'+1$ and $d_{s+jm}(C^{\perp})=K'+j+2$ for $j=1,\dots,n-K'-1).$ 
}
\end{example}

\subsection{Another definition of generalized weights}
Ravagnani has given another definition in \cite[Definition 23]{ravagnani16} of generalized weights of Delsarte codes that is based on the following notion of anticodes. 

\begin{definition}
{\rm
By an \emph{optimal anticode} we mean an $\mathbb{F}_q$-linear subspace ${\mathcal{A}}$ of $\Mat$ such that 
$\dim_{\mathbb{F}_q} {\mathcal{A}} =m \left(\mathrm{maxrank}({\mathcal{A}})\right)$, where $\mathrm{maxrank}({\mathcal{A}})$ denotes 
the maximum possible rank of any matrix in ${\mathcal{A}}$.
}
\end{definition}

Here is Ravagnani's definition of generalized weights of Delsarte codes.

\begin{definition} \label{acodedef}
{\rm 
Let $C$ be a Delsarte code of dimension $K$. 
Define
 $$
a_r(C)=\frac{1}{m}\min \left\{\dim_{\mathbb{F}_q}{\mathcal{A}} : {\mathcal{A}}  \textrm{ an optimal anticode such that  }\dim_{\mathbb{F}_q}({\mathcal{A}} \cap C)\ge r\right\}.
$$
}
\end{definition}

A relationship between the two notions (given in Definitions \ref{Def-drC} and \ref{acodedef}) is stated below. This result is given in \cite[Theorem 9]{MM} as well as \cite[Proposition 2.11]{GJLR}, and we refer to the former for a proof of the following theorem. 

\begin{theorem}\label{ardr}
Let $C$ be a Delsarte code. 
Then for each $r=1, \dots , \dim C$, 
$$
a_r(C)=d_r(C) \text{ if } \textcolor{blue}{m > n}, \quad \text{whereas} \quad a_r(C) \le d_r(C) \text{ if } m=n.
$$
%
\end{theorem}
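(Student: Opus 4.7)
\medskip

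My plan splits the proof into an easy inequality that is valid whenever $m \ge n$, and a harder reverse inequality that needs the structure of optimal anticodes and only goes through when $m > n$.

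For the inequality $a_r(C) \le d_r(C)$, the natural move is to promote a distinguished subspace of $E$ to a canonical anticode. Given $X \in \Sigma(E)$, I would set
\[
\mathcal{A}(X) = \{ M \in \M : \text{rowspace}(M) \subseteq X\}.
\]
Since each of the $m$ rows can be chosen independently in the $\dim X$-dimensional space $X$, we have $\dim_{\Fq} \mathcal{A}(X) = m\dim X$. Any matrix in $\mathcal{A}(X)$ has rank at most $\min(m,\dim X)$, and under the standing assumption $m \ge n \ge \dim X$ this maximum $\dim X$ is actually attained (pick $\dim X$ linearly independent rows spanning $X$ and fill the remaining $m - \dim X$ rows with zeros). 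Hence $\mathcal{A}(X)$ is an optimal anticode, and by Definition~\ref{CXrhoC}(a) we have $\mathcal{A}(X) \cap C = C(X)$. Taking $X$ to achieve the minimum in the definition of $d_r(C)$ yields an optimal anticode with $\dim(\mathcal{A}(X) \cap C) \ge r$ and $\dim \mathcal{A}(X)/m = \dim X = d_r(C)$, which gives $a_r(C) \le d_r(C)$. This is already enough for the case $m = n$.

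For the reverse inequality in the case $m > n$, the crucial ingredient is the classification of optimal anticodes in $\M$ due (in various forms) to Meshulam, and explicitly recorded in rank metric guise in \cite{ravagnani16, MM, GJLR}: when $m > n$, every optimal anticode $\mathcal{A} \subseteq \M$ is in fact of the form $\mathcal{A} = \mathcal{A}(X)$ for some $X \in \Sigma(E)$ with $\dim X = \mathrm{maxrank}(\mathcal{A}) = \dim_{\Fq}\mathcal{A}/m$. Granting this classification, given any optimal anticode $\mathcal{A}$ with $\dim_{\Fq}(\mathcal{A} \cap C) \ge r$, write $\mathcal{A} = \mathcal{A}(X)$; then $\mathcal{A} \cap C = C(X)$ so $\dim C(X) \ge r$, and therefore by Definition~\ref{Def-drC},
\[
d_r(C) \le \dim X = \frac{\dim_{\Fq}\mathcal{A}}{m}.
\]
Minimising over $\mathcal{A}$ gives $d_r(C) \le a_r(C)$, and combined with the first paragraph this yields the equality $a_r(C) = d_r(C)$ when $m > n$.

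The main obstacle is clearly the classification step. In the square case $m = n$ the classification genuinely fails: the ``transpose'' anticodes $\{M \in \M : \mathrm{colspace}(M) \subseteq Y\}$ for $Y \subseteq \Fq^m$ are also optimal, and more generally certain mixed anticodes arise, any of which may meet $C$ in a subspace of dimension $\ge r$ with $\dim_{\Fq}\mathcal{A}/m$ strictly smaller than $d_r(C)$. This structurally explains why only $a_r(C) \le d_r(C)$ survives when $m = n$, and why the proof does not extend. I would therefore cite the classification as a black box from \cite{MM} (the proof given there follows Meshulam-type arguments) rather than reprove it, as the authors themselves do.
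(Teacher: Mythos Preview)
Your proposal is correct and follows exactly the standard argument. Note, however, that the paper does not supply its own proof of this theorem at all: immediately before the statement it says the result ``is given in \cite[Theorem 9]{MM} as well as \cite[Proposition 2.11]{GJLR}, and we refer to the former for a proof.'' Your sketch---build the row-space anticode $\mathcal{A}(X)=\M(X)$ to get $a_r(C)\le d_r(C)$ for $m\ge n$, then invoke the Meshulam-type classification of optimal anticodes (every optimal anticode in $\Mat$ for $m>n$ is some $\M(X)$) to reverse the inequality---is precisely the content of those cited references, so you are not deviating from the paper but rather spelling out what it leaves to the literature.
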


\begin{remark} \label{imprem}
{\rm Theorem \ref{ardr} gives a second proof of Corollary \ref{codewei} when \textcolor{blue}{ $m > n$}, since the corresponding result for the $a_r(C)$ and $a_r(C^{\perp})$ is given by Ravagnani \cite[Corollary 38]{ravagnani16}. Also, when $m=n$, both  \cite[Corollary 38]{ravagnani16} and Corollary \ref{codewei} are still valid, but the $a_r$ and the $d_r$ are not necessarily the same. An example where they are different is given by Mart\'{\i}nez-Pe\~{n}as and Matsumoto  \cite[Section IX,C]{MM}. Another proof of Corollary \ref{codewei} is given in 
 \cite[Lemma 66, Corollary 68]{MM}. 
We refer to \cite[Theorem 5.14]{G} and \cite[Theorem 5.18]{G} for a fuller treatment of the case $m=n$,  and also for the cases where $m < n$. We will, however, based on that treatment, return to the case $m=n$ in Subsection \ref{square}. Here we just remark briefly that if, on the other hand, $m<n$, then it will  be more natural to work with the $(q,n)$-polymatroid $P(C^T)$ (where $C^T$ is the set of  transposes of matrices in $C$) than with the $(q,m)$-polymatroid $P(C).$ In particular it follows from \cite[Theorem 5.18]{G} that the generalized weights given in \cite{ravagnani16} coincide with the generalized weights for the $(q,n)$-polymatroid $P(C^T)$. 
Hence Wei duality for polymatroids gives a second proof for the Wei duality of Ravagnani's generalized weight also for $m<n $.}
\end{remark}
\begin{remark}
{\rm It is a main point in our exposition that we can prove our main results, Theorem \ref{partialWei}, and Corollary \ref{important},
without even mentioning Delsarte rank metric codes, but at the same time, these results imply the 
``Wei duality" when the $(q,m)$-polymatroid in question indeed comes from a Delsarte rank metric code. 
One might wonder whether there are $(q,m)$-polymatroids that do not come from Delsarte rank metric codes, but where our Wei duality may give interesting descriptions for other objects. For usual matroids, there are matroids that are non-representable, and thus do not come from linear codes. An example is the non-Pappus matroid $\mathcal{M}$ (say), with ground set 
of cardinality $9$. 
The Wei duality of matroids, as described for example in \cite{BJMS} or \cite[Proposisjon 5.18]{larsen05} without mentioning codes, is enjoyed by $\mathcal{M}$ as well.  But for the non-Pappus matroid $\mathcal{M}$, 
Wei duality can also be interpreted  in a coding theoretic sense, except that instead of (linear block) codes, we have to consider the so called almost affine codes, which can be nonlinear and whose alphabet set need not even be a field; see
\cite[Example 1]{JV}. 
In analogy with this,  we may ask the following. 
Is there 
a  class of codes strictly bigger (or quite different) than that of Delsarte rank metric codes, such that 
the codes in this class 
give rise to $(q,m)$-polymatroids, and where  duality of codes corresponds to duality of $(q,m)$-polymatroids, and moreover, ``Wei duality" for $(q,m)$-polymatroids can be interpreted in a  coding theoretic sense?
}
\end{remark}

%
%

\begin{example} \label{sumsofmatroids}
{\rm 
We will describe a $(q,m)$-polymatroid $P$, which is not defined as a $P(C)$ for a single Delsarte 
code $C$, but is derived from a collection of $m$ codes.

 Let $C_1,\dots,C_m$ be $m$ block codes of length $n$ over $\mathbb{F}_q$.
We will view the codewords as $1\times n$ matrices, and the $C_i$ as Delsarte rank metric codes. Let 
$$
K_i =\dim C_i  \, \text{ for } 1=\dots,m,  \quad \text{and let} \quad K=\sum_{i=1}^m K_i.
$$
Note that for each $i=1, \dots , m$, the space $C_i(X)$ coincides with $C_i\cap X$ whenever $X\in \Sigma(E)$ and thus the rank function $\rho_i = \rho^{ }_{C_i}$ of $P(C_i)$ is given by 
$$
\rho_i(X)=\dim C_i - \dim (C_i\cap X^{\perp}) \quad \text{for } X \in \Sigma(E).
$$
Also note that the  trace dual $C_i^{\perp}$ is simply the usual orthogonal complement of $C_i$ as a block code. 
Clearly, $P(C_i)$ as well as $P(C_i^{\perp})$ are $(q,1)$-polymatroids. Since $P(C_i)^*=P(C_i^{\perp})$, the rank function $\rho^*_i$ of $P(C_i)^*$ satisfies 
$$
\rho_i^*(X)=\rho_i(X^{\perp})+\dim X-\rho_i(E) = \dim C_i^{\perp} - \dim (C_i^{\perp}\cap X^{\perp})  
\quad \text{for } X \in \Sigma(E).
$$
Now define $\rho: \Sigma(E) \to \mathbb{N}_0$ by 
%
$$
\rho(X)=\rho_1(X)+\cdots+\rho_m(X) \quad \text{for } X \in \Sigma(E).
$$
It is easy to see that $P=(E,\rho)$ satisfies all the axioms for $(q,m)$-polymatroids. Moreover, as a $(q,m)$-polymatroid, for any $X \in \Sigma(E)$, we obtain 
$$
\rho^*(X)=\rho(X^{\perp})+m\dim X-\rho(E)=\sum_{i=1}^m\rho_i(X^{\perp})+\sum_{i=1}^m\dim X-\sum_{i=1}^m\rho_i(E)= \sum_{i=1}^m \rho_i^*(X) 
$$
and hence if $\nu^*$ denotes the conullity function of $P$, then, in view of \eqref{nuC}, we find 
$$
{\nu}^*(X)=m\dim X -\rho^*(X)=\sum_{i=1}^m \dim X-\rho_i^*(X) = 
\sum_{i=1}^m\dim C_i(X)=\sum_{i=1}^m\dim C_i\cap X.
$$
Thus the generalized weights of the $(q,m)$-polymatroid $P$  are given by 
$$
d_r(P)=\min \big\{\dim X : X \in \Sigma(E) \text{ with } \sum_{i=1}^m \dim (C_i\cap X) \ge r\big\} 
$$
for $ r=1,\dots, K$, whereas the generalized weights of $P^*$ are given by 
  $$
d_r(P^*)=\min \big\{\dim X : X \in \Sigma(E) \text{ with } \sum_{i=1}^m \dim (C_i^{\perp}\cap X) \ge r\big\} 
$$
for $r=1,\dots, mn - K$.
A relation between these two sets of generalized weights is described in Theorem \ref{partialWei} and Corollary \ref{important}. From the construction of $P$, we see that $d_1(P)=1,$ unless all $C_i$ are zero, since we may take some one dimensional $X$ contained in some $C_i$, and calculate $\dim C_i \cap X =1$. Analogously, $d_1(P^*)=1$ as well, unless $C_i=E$, for all $i$. On the other hand $d_K(P)=n$, unless there is a strict subspace of $E$ that contains all the codes $C_i$. So, if   $C_i \ne E$, for some $i$, but the span of 
$C_1\cup \dots \cup C_m$  is $E$, then $d_1(P^*)=1$ and $d_{\rk(P)}(P)=n$, a possibilty excluded if the ``usual" Wei duality were applicable, 
but which 
indeed may occur, and in fact does occur, under the ``revised" duality  described in Theorem \ref{partialWei} and 
Corollary \ref{important}. Given that $d_1(P^*)=1$,  Theorem \ref{partialWei} only prohibits that $d_r(P)=n$ for all $r$
congruent to $K+1$ modulo $m$. But $K$ is certainly not congruent to $K+1$ modulo $m$, and so $d_K(P)$ may very well be $n$.

In a certain sense  this example is also associated to $m\times n$ matrices, since each element of $C_1\times \cdots \times C_m$ could be presented as $m$ codewords of length $n$ arranged as an $m \times n$ matrix. Our function $\rho$ does however not ``measure the behaviour" of the row space of the matrix, including all \emph{linear combinations of} the rows, as a rank function of a $P(C)$ of a Delsarte rank metric code $C$ would have done; it only ``measures the behaviour" of the individual rows.
``Intermediate" examples could have been made by taking $\rho(X)=\rho_1(X)+\cdots+\rho_s(X),$   where $\rho_i$ is the rank function of an $m_i \times n$ Delsarte rank metric code,  for $1\le i \le s$, and  where $m_1, \dots , m_s$ are nonnegative integers with 
$m_1+ \cdots + m_s = m$. 
 }
\end{example}    

\section{Demi-polymatroids and their Generalized Weights} 
\label{sec5}                       

In this section, we discuss a generalization of the notion of $(q,m)$-polymatroids, and observe that most of the results in Section \ref{sec:defnnot} can be extended in a more general context. In the next section we shall see how this  generalization is relevant for Delsarte rank metric codes.

\begin{definition} \label{someR}
{\rm
A $(q,m)$-{\em demi-polymatroid} is an ordered pair $(E,\rho)$ consisting of
the vector space $E=\mathbb{F}_q^n$ over $\mathbb{F}_q$ and a function $\rho:\Sigma(E)\to\mathbb{N}_0$
satisfying the following three conditions: 
\begin{list}{}{\leftmargin=1em\topsep=1.5mm\itemsep=1mm}
\item[{\rm (R1)}]  $0\leq \rho(X) \leq m\dim X$ for all $X\in \Sigma(E)$;
\item[{\rm (R2)}]  $\rho(X) \le \rho(Y)$ for all $X, Y \in \Sigma(E)$ with $X \subseteq Y$;
\item[{\rm (R4)}]  $\rho^*:\Sigma(E)\to\mathbb{N}_0$ defined by $\rho^*(X)=\rho(X^{\perp})+m\dim X-\rho(E)$ for $X\in \Sigma(E)$  
also satisfies (R1) and (R2).
\end{list}
}
\end{definition}

The notion of dual is defined exactly as in the case of $(q,m)$-polymatroids and we have the following analogue of Proposition \ref{dualdef} or equivalently,  \cite[Proposition 5]{S}.

\begin{proposition} \label{dualdef2} 
Let $P=(E,\rho)$ be a $(q,m)$-demi-polymatroid and let $\rho^*$ be as in $\mathrm{(R4)}$ above. Then the ordered pair $(E,\rho^*)$ 
is also a $(q,m)$-demi-polymatroid, denoted by $P^*$ and 
called the dual $(q,m)$-demi-polymatroid of $P$.
\end{proposition}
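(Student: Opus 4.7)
The plan is to observe that the three conditions required of $\rho^*$ are almost tautologically packaged in the definition. Specifically, the conditions (R1) and (R2) for $\rho^*$ are asserted verbatim by (R4) applied to $\rho$. So the only real content to check is (R4) for $\rho^*$, namely that $(\rho^*)^*:\Sigma(E)\to\mathbb{N}_0$ satisfies (R1) and (R2).

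The key step is the \emph{double-dual identity} $(\rho^*)^*=\rho$. To prove this I would compute directly from the definition. First, applying (R1) to $X=\{\mathbf{0}\}$ gives $\rho(\{\mathbf{0}\})=0$, hence
\[
\rho^*(E)=\rho(E^\perp)+m\dim E-\rho(E)=\rho(\{\mathbf{0}\})+mn-\rho(E)=mn-\rho(E).
\]
Next, using $(X^\perp)^\perp=X$ and $\dim X^\perp=n-\dim X$, one has
\[
\rho^*(X^\perp)=\rho(X)+m(n-\dim X)-\rho(E).
\]
Combining these,
\[
(\rho^*)^*(X)=\rho^*(X^\perp)+m\dim X-\rho^*(E)=\rho(X)+mn-\rho(E)-(mn-\rho(E))=\rho(X).
\]

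With $(\rho^*)^*=\rho$ in hand, the verification concludes: $(\rho^*)^*$ satisfies (R1) and (R2) precisely because $\rho$ does (by hypothesis on $P$), so $\rho^*$ satisfies (R4). Together with the (R1) and (R2) for $\rho^*$ supplied by (R4) of $\rho$, this shows $(E,\rho^*)$ is a $(q,m)$-demi-polymatroid. I do not anticipate any genuine obstacle: the only subtlety is remembering to deduce $\rho(\{\mathbf{0}\})=0$ from (R1) in order to simplify $\rho^*(E)$, which is a one-line observation. The involutive identity $(\rho^*)^*=\rho$ also immediately gives $(P^*)^*=P$, mirroring the analogous remark following Proposition~\ref{dualdef} for $(q,m)$-polymatroids.
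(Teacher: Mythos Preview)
Your proof is correct and follows essentially the same approach as the paper: both argue that (R1) and (R2) for $\rho^*$ are given by (R4) of $\rho$, and then verify (R4) for $\rho^*$ by computing the double-dual identity $(\rho^*)^*=\rho$ using $\rho(\{\mathbf{0}\})=0$ and $\dim X+\dim X^\perp=n$. The only difference is cosmetic organization of the computation.
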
 

\begin{proof}
By the definition of a $(q,m)$-demi-polymatroid, $(E,\rho^*)$ satisfies (R1) and (R2). Moreover, since
$(\rho^*)^*(X)=\rho^*(X^{\perp})+m\dim X-\rho^*(E)$ equals 
$$
 \big(\rho(X)+m\dim X^{\perp}-\rho(E)\big)+m\dim X - \big( \rho(\{\mathbf{0}\})+m\dim E-\rho(E)\big)
$$
and since $\rho(\{\mathbf{0}\})=0$, thanks to  (R1), and $\dim X + \dim X^\perp = \dim E$, we see that $(\rho^*)^*=\rho$, and so (R4) is satisfied by $(E, \rho^*)$. 
\end{proof}

\begin{proposition}\label{BasicDemi}
Let $P=(E, \rho)$ be a $(q,m)$-polymatroid. Then $P$ is a $(q,m)$-demi-polymatroid. Moreover, if $\nu$ and $\nu^*$ denote, as usual, the nullity and conullity functions of $P$, then both $(E, \nu)$ and $(E, \nu^*)$ are 
$(q,m)$-demi-polymatroids. In particular, there exists a $(q,m)$-demi-polymatroid that is not a $(q,m)$-polymatroid.
\end{proposition}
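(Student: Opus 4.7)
The plan is to verify the three assertions in sequence. For the first, I would observe that the axioms (R1) and (R2) are common to both $(q,m)$-polymatroids and $(q,m)$-demi-polymatroids, so the only thing requiring attention is (R4). But (R4) asks that $\rho^*$ (defined by $\rho^*(X)=\rho(X^{\perp})+m\dim X-\rho(E)$) satisfies (R1) and (R2), and this is immediate from Proposition~\ref{dualdef}, which says that $(E,\rho^*)$ is itself a $(q,m)$-polymatroid, hence automatically obeys all of (R1), (R2), (R3).

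For the second assertion, the key idea is that on the functions $\nu$ and $\nu^*$, the duality operation $\sigma \mapsto \sigma^{*}$ of Definition~\ref{someR}(R4) simply swaps the two. I would carry out the short calculation
\begin{align*}
\nu(X^{\perp})+m\dim X-\nu(E) &= \bigl(m\dim X^{\perp}-\rho(X^{\perp})\bigr)+m\dim X-\bigl(mn-\rho(E)\bigr)\\
&= \rho(E)-\rho(X^{\perp})\ =\ \nu^*(X),
\end{align*}
and the analogous one showing that the dualization of $\nu^*$ returns $\nu$. Thus $(E,\nu)$ and $(E,\nu^*)$ are exchanged by the demi-polymatroid dualization. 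With this in hand, verifying (R4) for each of $\nu$ and $\nu^*$ reduces to verifying (R1) and (R2) for the \emph{other} one. Now (R1) for $\nu$ follows directly from (R1) for $\rho$, while (R1) for $\nu^*$ follows from the identity $\nu^*(X)=m\dim X-\rho^*(X)$ combined with (R1) for $\rho^*$ (which holds by Proposition~\ref{dualdef}). Monotonicity (R2) for both $\nu$ and $\nu^*$ is exactly Proposition~\ref{tech}(a). So both pairs satisfy (R1), (R2), and (R4).

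For the last assertion, rather than constructing a new example, I would point to Remark~\ref{exa:not_polymatroid}: for the uniform $(q,m)$-polymatroid $U(1,2)$ with $m\ge 1$, it is shown there that the functions $\nu$ and $\nu^*$ fail the submodularity axiom (R3) (taking $X,Y$ two distinct lines in $\mathbb{F}_q^2$ yields $\nu(X+Y)+\nu(X\cap Y)=m\not\le 0=\nu(X)+\nu(Y)$ whenever $m\ge 1$, and similarly for $\nu^*$). Combined with the second assertion, this exhibits $(E,\nu)$ (or $(E,\nu^*)$) as a $(q,m)$-demi-polymatroid that is not a $(q,m)$-polymatroid.

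There is no real obstacle here; the entire argument is a tidy bookkeeping exercise built on Proposition~\ref{dualdef}, Proposition~\ref{tech}, and Remark~\ref{exa:not_polymatroid}. The one thing worth flagging as the conceptual step — not a difficulty, but the point that makes everything click — is the observation that the demi-polymatroid dualization swaps $\nu$ and $\nu^*$; once this is recorded, (R4) for each of them is free.
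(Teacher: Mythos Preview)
Your proof is correct and follows essentially the same approach as the paper: the first assertion via Proposition~\ref{dualdef}, the second via the observation that demi-polymatroid dualization swaps $\nu$ and $\nu^*$ together with Proposition~\ref{tech}, and the last via the $U(1,2)$ example from Remark~\ref{exa:not_polymatroid}. The only cosmetic difference is that the paper derives (R1) for $\nu$ and $\nu^*$ from Proposition~\ref{tech} (using $\nu(\{\mathbf{0}\})=0$), whereas you read it off directly from (R1) for $\rho$ and $\rho^*$; both are immediate.
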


\begin{proof}
The first assertion follows from Proposition \ref{dualdef}. 
Next, recall that 
\begin{equation}\label{nunustar}
{\nu}(X)=m\dim X - \rho(X) \quad \text{and} \quad 
{\nu}^*(X)=m\dim X -\rho^*(X)= \rho(E)-\rho(X^{\perp})
\end{equation}
for any $X\in \Sigma (E)$. Note, in particular, 
that $\nu(\{\mathbf{0}\})=0 = \nu^*(\{\mathbf{0}\})$, and so 
Proposition \ref{tech} implies that both $\nu$ and $\nu^*$ satisfy (R1) and (R2). 
The dual of $\nu$ in the sense of (R4) is  the function that associates to every $X\in \Sigma (E)$ the integer
$$
\nu(X^{\perp})+m\dim X-\nu(E) = \big( m\dim X^{\perp}- \rho(X^{\perp})\big) +m\dim X -\big( mn - \rho(E) \big), 
$$
which is easily seen to be ${\nu}^*(X)$. 
Thus the two possible meanings of $\nu^*$ coincide. Hence by Proposition \ref{tech}, $(E, \nu)$ satisfies (R4) as well. Furthermore, it is readily seen that $(\nu^*)^*=\nu$, and so Proposition \ref{tech} also shows that $(E, \nu^*)$ satisfies (R4). Thus both $(E, \nu)$ and $(E, \nu^*)$ are 
$(q,m)$-demi-polymatroids. In particular, if we take $P = U(1,2)$, then from Remark 
\ref{exa:not_polymatroid}, we see that the corresponding pair $(E, \nu)$ is a $(q,m)$-demi-polymatroid, but not a $(q,m)$-polymatroid.
%
\end{proof}

\begin{example}\label{nuCX}
{\rm 
If $C\subseteq \Mat$ is a Delsarte code and $\delta: \Sigma(E)\to\mathbb{N}_0$ is defined~by 
$$
\delta (X) = \dim C(X) \quad \text{for } X\in \Sigma(E),  
$$
then Remark \ref{nuCstar} and Proposition \ref{BasicDemi} shows that $(E, \delta)$ is a $(q,m)$-demi-polymatroid. Moreover, in view of Example \ref{MRD} and  Remark 
\ref{exa:not_polymatroid}, we see that $(E, \delta)$ is, in general, not a $(q,m)$-polymatroid.
}
\end{example}

In general, if $P=(E, \rho)$ is a $(q,m)$-demi-polymatroid, then the nullity function $\nu$ and  the conullity function $\nu^*$ of $P$ are defined in exactly the same way as in the case of  $(q,m)$-polymatroids, i.e., by equation \eqref{nunustar}. 
Our proof of Proposition \ref{tech}\,(a) used the property (R3), which is not available in the case of $(q,m)$-demi-polymatroids, but we will 
show below that the result is still valid in this case. 

\begin{proposition} \label{tech2}
Let $P= (E, \rho)$ be a $(q,m)$-demi-polymatroid and let $X,Y \in \Sigma (E)$ with $X\subseteq Y$.  Then: 
\begin{enumerate}
\item[{\rm (a)}]  ${\nu}(X) \le {\nu}(Y)$ and 
$\, {\nu}^*(X) \le {\nu}^*(Y)$; 
\item[{\rm (b)}] ${\nu}(Y)-{\nu}(X)  \le m(\dim Y - \dim X)$ and 
$\, {\nu}^*(Y)-{\nu}^*(X)  \le m(\dim Y - \dim X)$.
\end{enumerate}
\end{proposition}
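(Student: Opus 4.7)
The plan is to mirror the structure of Proposition~\ref{tech}, but bypass the submodularity axiom (R3) used there, since (R3) is not assumed for demi-polymatroids. The key observation is that (R4) gives us that $\rho^*$ satisfies (R2), and this monotonicity of $\rho^*$, together with the inclusion-reversing property of the duality $X \mapsto X^\perp$, is exactly what we need to replace (R3) in the proof of part (a).

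For part (a), I would begin by rewriting the nullity functions in ``crossed'' form using the orthogonal complement. From the definition of $\rho^*$ we have
$$
\rho^*(X^\perp) = \rho(X) + m\dim X^\perp - \rho(E),
$$
so that, using $\dim X + \dim X^\perp = n$,
$$
\nu(X) = m\dim X - \rho(X) = mn - \rho(E) - \rho^*(X^\perp).
$$
Likewise, by the definition of $\nu^*$,
$$
\nu^*(X) = \rho(E) - \rho(X^\perp).
$$
Now if $X \subseteq Y$, then $Y^\perp \subseteq X^\perp$. Applying (R2) to $\rho$ yields $\rho(Y^\perp) \le \rho(X^\perp)$, whence $\nu^*(X) \le \nu^*(Y)$. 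Applying (R2) to $\rho^*$ (available by (R4)) gives $\rho^*(Y^\perp) \le \rho^*(X^\perp)$, whence $\nu(X) \le \nu(Y)$.

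For part (b), no use of (R3) was needed even in the original proof, so the same argument carries over verbatim. Writing
$$
\nu(Y) - \nu(X) = m(\dim Y - \dim X) - \bigl(\rho(Y) - \rho(X)\bigr),
$$
the bound follows immediately from (R2) applied to $\rho$. The analogous inequality for $\nu^*$ follows from (R2) applied to $\rho^*$, which again is guaranteed by (R4); equivalently, one may invoke the already established part (b) with $P$ replaced by the demi-polymatroid $P^*$ (valid because of Proposition~\ref{dualdef2}).

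The only genuine subtlety is part (a), where the original proof used the submodular inequality on $X$ and a complement $Z$ of $X$ inside $Y$; that argument is not available here. The replacement above is essentially a ``duality trick'': push the monotonicity question through $X \mapsto X^\perp$ onto the dual rank function, where (R4) provides precisely the monotonicity we need.
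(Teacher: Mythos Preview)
Your proof is correct and follows essentially the same approach as the paper: for part (a) you use the duality trick of applying (R2) to $\rho^*$ (granted by (R4)) together with the inclusion-reversal $X\subseteq Y \Rightarrow Y^\perp \subseteq X^\perp$, and for part (b) you note, as the paper does, that only (R2) was ever needed. The only difference is cosmetic: you write out the identity $\nu(X)=mn-\rho(E)-\rho^*(X^\perp)$ explicitly, whereas the paper manipulates the inequality $\rho^*(Y^\perp)\le\rho^*(X^\perp)$ directly.
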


\begin{proof}
(a) 
Since $\rho^*$ satisfies (R2), thanks to (R4), and since $Y^\perp \subseteq X^\perp$, we see that
$\rho^*(Y^\perp) \le \rho^*(X^\perp)$, which shows that $\rho(Y) + m \dim Y^\perp \le \rho(X) + m \dim X^\perp$. Subtracting from $mn =m\dim E$, we find 
${\nu}(X) \le {\nu}(Y)$. Similarly, $\, {\nu}^*(X) \le {\nu}^*(Y)$.
%

\smallskip

(b) The proof of Proposition \ref{tech} (b) only uses (R2) for $\rho$ and $\rho^*$, and so it is still valid here.
\end{proof}

We define the generalized weights for $(q,m)$-demi-polymatroids in exactly the same way as in the case of $(q,m)$-polymatroids:

\begin{definition}\label{Defdrdemi}
Let $P=(E,\rho)$ be a  $(q,m)$-demi-polymatroid. 
For $r=1,\dots,\rho(E)$, the 
\emph{$r$th generalized  weight} of $P$ is 
defined by 
$$
d_r(P)= \min \{\dim X : X\in \Sigma(E) \text{ with }  \nu^*(X) \ge r\}.
$$ 
\end{definition}

We then have the following more general result about Wei-type duality. 

\begin{theorem}\label{DemiDuality}
The results in Theorem \ref{partialWei} and Corollary \ref{important} are valid also for $(q,m)$-demi-polymatroids $P$.
\end{theorem}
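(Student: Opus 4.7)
The strategy is to inspect the proofs of Theorem \ref{partialWei} and Corollary \ref{important} given in Section \ref{sec3} and verify that none of them actually invokes the submodularity axiom (R3); they rely only on (R1), (R2), the algebraic relation between $\rho$ and $\rho^*$, and the monotonicity assertions of Proposition \ref{tech}(a)--(b). Since Proposition \ref{tech2} provides precisely these monotonicity statements for $(q,m)$-demi-polymatroids, the entire chain of auxiliary results should transport verbatim.

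In detail, the first thing I would do is re-examine Lemma \ref{hx}. Its proof extends a subspace $X$ of dimension $x-1$ to one of dimension $x$ (and similarly enlarges a candidate witness $Z$ for $d_r(P)$ to a subspace of dimension $x-1$), and then applies monotonicity of $\nu^*$ (and, for $P^*$, of $\nu$). Both monotonicity statements are guaranteed for demi-polymatroids by Proposition \ref{tech2}(a), so the equivalences \eqref{drP} and \eqref{dsPstar} remain intact. Next, for Lemma \ref{hhstar}, the identity $\nu^*(X) = \nu(X^\perp) - m(n - \dim X) + K$ is a purely algebraic consequence of $\nu^*(X) = \rho(E) - \rho(X^\perp)$ and $\nu(X^\perp) = m\dim X^\perp - \rho(X^\perp)$, with no appeal to (R3). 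Taking the maximum over subspaces of fixed dimension $x$ (and using that $X\mapsto X^\perp$ is a dimension-reversing bijection on $\Sigma(E)$) yields \eqref{basich}; differencing then gives \eqref{diffh} and the bound $0 \le h^*(x) - h^*(x-1) \le m$, which only require monotonicity of $h$ and $h^*$, itself a consequence of Proposition \ref{tech2}(a).

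With Lemmas \ref{hx} and \ref{hhstar} in hand for demi-polymatroids, Corollary \ref{Mono} follows by exactly the same argument as in the polymatroid case: $d_r(P) = d_{r+m}(P) = x$ would force $h^*(x) - h^*(x-1) \ge m+1$, contradicting the bound just established. The proof of Theorem \ref{partialWei} is then a pure manipulation of \eqref{drP}, \eqref{dsPstar}, \eqref{basich} and \eqref{diffh}, and so it carries over. The proof of Corollary \ref{important} uses Theorem \ref{partialWei}, Proposition \ref{WeakMono} and Corollary \ref{Mono}; Proposition \ref{WeakMono} invokes only $\nu^*(\{\mathbf{0}\})=0$ (which is immediate from (R1) applied to $\rho^*$) and the definition of $d_r$, both of which remain valid. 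Hence the cardinality counting in the proof of Corollary \ref{important} proceeds unchanged.

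In short, there is no substantive obstacle, only bookkeeping. The reason (R3) was dispensable is already visible in the original argument: submodularity was invoked only to secure the monotonicity of $\nu$ and $\nu^*$, and the axiom (R4) was precisely designed to guarantee that monotonicity, as Proposition \ref{tech2} witnesses. Thus Theorem \ref{DemiDuality} is proved by repeating each of the proofs in Section \ref{sec3} with Proposition \ref{tech2} substituted for Proposition \ref{tech}.
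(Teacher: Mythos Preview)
Your proposal is correct and follows exactly the same approach as the paper: both argue that the proofs in Section~\ref{sec3} use only (R1), (R2), and the monotonicity of $\nu$, $\nu^*$ (via Proposition~\ref{tech}), and that Proposition~\ref{tech2} supplies the latter for demi-polymatroids. Your version simply spells out the auxiliary results (Lemmas~\ref{hx}, \ref{hhstar}, Corollary~\ref{Mono}, Proposition~\ref{WeakMono}) one by one, whereas the paper states the observation in a single sentence.
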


\begin{proof}
Examining the proof of Theorem  \ref{partialWei}, we see that all arguments follows from axioms (R1), (R2), (R4), and there is no need for axiom (R3). One does use Proposition \ref{tech} whose proof depended on (R3), but we have established it for 
$(q,m)$-demi-polymatroids in Proposition \ref{tech2} above. 
\end{proof}

\subsection{Wei duality for square matrices} \label{square}
In this subsection, we consider the case when $m=n$. In this case, if $C \subseteq \Mat$ is a Delsarte rank metric code, then so is $C^T:=\{M^t: M \in C\}$, and thus, we obtain two $(q,m)$-polymatroids $P(C) = (E, \rho^{ }_C)$ and $P(C^T) = (E, \rho^{ }_{C^T})$, where $E=\mathbb{F}_q^n$ and $\rho^{ }_{C}, \rho^{ }_{C^T}$ are 
as in \eqref{eq:rhoC}.

\begin{proposition} \label{demisquare}
Assume that $m=n$. Let $C \subseteq \Mat$ be a Delsarte rank metric code. Consider $E=\mathbb{F}_q^n$ and define $\rho:\Sigma(E)\to \mathbb{N}_0$ by
$$
\rho(X)=\min \{\rho^{ }_{C}(X), \; \rho^{ }_{C^T}(X)\} \quad \text{for } X\in \Sigma(E).
$$ 
Then 
$P=(E,\rho)$ is a $(q,m)$-demi-polymatroid and its conullity function is given by 
$$
\nu^*(X)=\max \{\dim C(X), \; \dim C^T (X)\} \quad \text{for } X\in \Sigma(E).
$$ 
Moreover, the generalized weights of $P$ are given by
$$
d_r(P)=\min \{d_r(P(C)), \; d_r(P\left( C^T \right) \}
\quad \text{for } r=1, \dots ,  \rho(E).
$$ 
Consequetly, the Wei duality holds for Ravagnani's generalized weights $a_r(C)$. 
\end{proposition}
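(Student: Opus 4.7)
The plan is to verify the four assertions in order, with the final statement (Wei duality for $a_r(C)$) following by translating Theorem \ref{DemiDuality} from $P$ to $C$.

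I would first establish (R1), (R2), (R4) for $(E,\rho)$. For $\rho$ itself, (R1) and (R2) are immediate: the pointwise minimum of two functions satisfying (R1) satisfies (R1), and the pointwise minimum of two monotonically non-decreasing functions is non-decreasing. For (R4), the key observation is that $\rho(E)=\dim C=\rho^{ }_C(E)=\rho^{ }_{C^T}(E)$ (since $\dim C^T=\dim C$), which after a short manipulation yields
$$
\rho^*(X) \; = \; \min\{\rho^{ }_C(X^\perp),\, \rho^{ }_{C^T}(X^\perp)\}+m\dim X-\dim C \; = \; \min\{\rho^*_C(X),\, \rho^*_{C^T}(X)\},
$$
where $\rho^*_C$ and $\rho^*_{C^T}$ are the rank functions of $P(C)^*$ and $P(C^T)^*$, respectively (cf.\ Proposition \ref{dualdef}). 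Since each of these satisfies (R1) and (R2), so does their minimum, so (R4) holds and $P$ is a $(q,m)$-demi-polymatroid.

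The formula for $\nu^*$ then drops out: converting the minimum into a maximum,
$$
\nu^*(X)\; =\; m\dim X-\rho^*(X) \; =\; \max\{m\dim X-\rho^*_C(X),\, m\dim X-\rho^*_{C^T}(X)\},
$$
which by Remark \ref{nuCstar} equals $\max\{\dim C(X),\,\dim C^T(X)\}$. The generalized weight formula follows immediately from Definition \ref{Defdrdemi}: the condition $\nu^*(X)\ge r$ is equivalent to $\dim C(X)\ge r$ or $\dim C^T(X)\ge r$, so the minimum of $\dim X$ over such $X$ splits into the smaller of $d_r(P(C))$ and $d_r(P(C^T))$.

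For the final assertion on $a_r(C)$, I would invoke the standard classification of optimal anticodes in $\Mat$ when $m=n$: every optimal anticode is either of the form $\mathcal{A}_X=\{M:\mathrm{rowsp}(M)\subseteq X\}$ for some $X\in\Sigma(E)$, or of the form $\mathcal{B}_Y=\{M:\mathrm{colsp}(M)\subseteq Y\}$ for some subspace $Y\subseteq\mathbb{F}_q^m$. Since $\dim(\mathcal{A}_X\cap C)=\dim C(X)$ and $\dim(\mathcal{B}_Y\cap C)=\dim C^T(Y)$, Definition \ref{acodedef} gives $a_r(C)=\min\{d_r(P(C)),\,d_r(P(C^T))\}=d_r(P)$. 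Applying the same identification to $C^\perp$, using $(C^\perp)^T=(C^T)^\perp$ together with Theorem \ref{commutes}, gives $a_r(C^\perp)=d_r(P^*)$. Wei duality for the $a_r$'s is then just a restatement of Theorem \ref{DemiDuality} for $P$.

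The main obstacle is axiom (R4): because the pointwise minimum of two submodular functions need not be submodular, one cannot appeal to (R3) directly and $P$ genuinely lies outside the polymatroid world. The route above bypasses this by showing that $\rho^*$ is itself a pointwise minimum of two polymatroid rank functions (those of the dual polymatroids $P(C)^*$ and $P(C^T)^*$), so that (R1) and (R2) for $\rho^*$ follow for free from Proposition \ref{dualdef}. The only other nonformal input is the classification of optimal anticodes when $m=n$, which is well documented in the literature on Delsarte codes.
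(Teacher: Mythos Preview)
Your argument is correct and follows the same overall outline as the paper, but your verification of (R4) is genuinely different and, in fact, cleaner than the one given there. The paper computes $\rho^*(X)=m\dim X-\max\{\dim C(X),\dim C^T(X)\}$ and then establishes monotonicity of $\rho^*$ by a case analysis on which of $\dim C(X),\dim C^T(X)$ and $\dim C(Y),\dim C^T(Y)$ realizes the maximum, comparing in each case to $\rho^*_C(Y)-\rho^*_C(X)$. You instead exploit the equality $\rho^{ }_C(E)=\rho^{ }_{C^T}(E)=\dim C$ to write $\rho^*=\min\{\rho^*_C,\rho^*_{C^T}\}$ directly, so (R1) and (R2) for $\rho^*$ follow at once from Proposition~\ref{dualdef}, avoiding any casework. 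This is a nicer route and yields the conullity formula just as readily.

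For the final assertion, the paper simply cites the identity $a_r(C)=\min\{d_r(P(C)),d_r(P(C^T))\}$ from the literature and appeals to Theorem~\ref{DemiDuality}. You supply the same identity via the classification of optimal anticodes in the square case, and you additionally spell out why the construction applied to $C^\perp$ yields exactly $P^*$: using $(C^\perp)^T=(C^T)^\perp$ together with Theorem~\ref{commutes}, one gets $\min\{\rho^{ }_{C^\perp},\rho^{ }_{(C^\perp)^T}\}=\min\{\rho^*_C,\rho^*_{C^T}\}=\rho^*$. This makes explicit a step the paper leaves implicit, and is needed to conclude that Wei duality for $d_r(P),d_r(P^*)$ translates into Wei duality for $a_r(C),a_r(C^\perp)$.
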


\begin{proof}
It is obvious that $\rho$ satisfies (R1) and (R2) of Definition \ref{someR}, since we know that each of $\rho_C$ and $\rho_{C^T}$ satisfies these properties. So, in order to prove that $P$ is a $(q,m)$-demi-polymatroid, it remains to show that (R4) is satisfied, 
which means that $\rho^*$ satisfies (R1) and (R2). To this end, let $X\in \Sigma(E)$. Then 
\begin{eqnarray}\label{rhostarmax}
\nonumber \rho^*(X) &= & \rho(X^{\perp}) +m\dim X-\rho(E) \\ \nonumber
&= &  \min \{ \rho^{ }_{C}(X^{\perp}), \; \rho^{ }_{C^T}(X^{\perp}) \}  +m\dim X- \dim C \\ \nonumber
&= & \min \{ \dim C - \dim C(X), \;  \dim C - \dim C^T (X) \}+m \dim X -\dim C \\ 
&= &  m\dim X - \max \{\dim C(X), \; \dim C^T(X)\}
\end{eqnarray}
This implies that $\rho^*(X) \le m\dim X$. Moreover, it also implies that $\rho^*(X) \ge 0$, because from 
\eqref{nuC} and Proposition~\ref{BasicDemi} we see that both 
$m\dim X - \dim C(X)$ and $m\dim X - \dim C^T(X)$ are 
nonnegative. Thus $\rho^*$ satisfies (R1). 
Next, we show that $\rho^*$ satisfies (R2). Fix  $X, Y\in \Sigma(E)$ with $X \subseteq Y$. In view of \eqref{rhostarmax}, the difference $\rho^*(Y)-\rho^*(X)$ can be written as 
$$
m\large(\dim Y - \dim X \large) - \big(\max\{\dim C(Y),\dim C^T(Y)\} - \max\{\dim C(X),\dim C^T(X)\} \big).
$$
Since the expression above is symmetric in $C$ and $C^T$, we may assume without loss of generality that 
$\max\{\dim C(Y),\dim C^T(Y)\}  = \dim C(Y)$. %
Now, 
in case $\max\{\dim C(X),\dim C^T(X)\} = \dim C(X)$, we see that
$$
\rho^*(Y)-\rho^*(X) = 
m\large(\dim Y - \dim X \large) - \large(\dim C(Y) - \dim C(X) \large) = \rho_{C}^*(Y)-\rho_{C}^*(X),
$$
which is nonnegative since $\rho_{C}^*$ satisfies (R1), thanks to Proposition \ref{dualdef}. 
In case $\max\{\dim C(X),\dim C^T(X)\} = \dim C^T(X)$, then $ \dim C^T(X) \ge  \dim C(X)$, and so
$$
\rho^*(Y)-\rho^*(X) = 
m\large(\dim Y - \dim X \large) - \large(\dim C(Y) - \dim C^T(X) \large) \ge \rho_{C}^*(Y)-\rho_{C}^*(X), 
$$
which is again nonnegative. Thus $\rho^*$ satisfies (R2). This proves that $P=(E,\rho)$ is a $(q,m)$-demi-polymatroid. The desired formula for the conullity function of $P$ is immediate from \eqref{rhostarmax}. 
This, in turn, shows that 
$$
d_r(P)=\min \{d_r(P(C)), \; d_r(P\left( C^T \right) \} \quad \text{for } r=1, \dots ,  \rho(E).
$$
Indeed, the inequality $d_r(P) \le \min \{d_r(P(C)), \; d_r(P\left( C^T \right) \}$ is clear from the definition and equation 
\eqref{nuC}. For the other inequality, it suffices to consider $X_0\in \Sigma(E)$ with $\max  \{\dim C(X_0), \; \dim C^T (X_0)\}  \ge r$ such that $d_r(P) = \dim X_0$.

The last assertion about Wei duality for Ravagnani's generalized weights $a_r(C)$
is an immediate consequence of Theorem \ref{DemiDuality} because we know from \cite[Theorem 38]{G}  that $a_r(C) = \min \{d_r(P(C)), \; d_r(P\left( C^T \right) \}$ for $1\le r \le \dim C = \rho(E)$. 
\end{proof}

\section{Flags of Delsarte Rank Metric Codes} 
\label{sec5plus}

\begin{definition}\label{RhoF}
{\rm
By a \emph{flag} of Delsarte codes we shall mean a tuple $\mathsf{F} = (C_1, \dots , C_s)$ of subspaces of $\M = \Mat$ such that $C_s \subseteq C_{s-1} \subseteq \cdots \subseteq C_1$. The \emph{rank function} associated to a flag $\mathsf{F} = (C_1, \dots , C_s)$ 
is the map $\rhoF : \Sigma(E) \to \mathbb{Z}$ given by 
\begin{equation}\label{rhoFdefn}
\rhoF(X) = \sum_{i=1}^s(-1)^{i+1}\dim \rho_i(X), \quad \text{where}\quad 
\rho_i(X)=\dim_{\mathbb{F}_q}C_i-\dim_{\mathbb{F}_q}C_i(X^{\perp})
\end{equation}
for $i=1, \dots , s$ and $X\in \Sigma(E)$. 
}
\end{definition}

Observe that if $\F$ is the singleton flag $(C)$, then $\rhoF$ coincides with the map $\rho^{ }_C$ introduced in Definition~\ref{CXrhoC}. We have noted in Remark~\ref{nuCstar} that $P(C) = (E, \rho^{ }_C)$ is a $(q,m)$-polymatroid. We will show that $\PF = (E, \rhoF)$ is a $(q,m)$-demi-polymatroid for any flag $\F$ of Delsarte codes. The main components of the proof will be shown in the form of a couple of lemmas. 

\begin{lemma}\label{Lem1F}
Let $C_1, C_2$ be Delsarte codes in $\M = \Mat$ such that $C_2\subseteq C_1$ and let $\rho_i =  \rho^{ }_{C_i}$ for $i=1, 2$. Then 
$\rho_2(X) \le \rho_1(X)$ for all $X\in \Sigma(E)$. 
\end{lemma}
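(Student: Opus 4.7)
The plan is to unwind the definition of $\rho_i(X)$ and reduce the desired inequality to a standard dimension bound for quotients of subspaces.

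Writing out the definition, the claim $\rho_2(X)\le \rho_1(X)$ is equivalent, after rearranging, to
$$
\dim C_1(X^{\perp}) - \dim C_2(X^{\perp}) \;\le\; \dim C_1 - \dim C_2.
$$
So the plan is to show that the inclusion $C_2\subseteq C_1$, combined with the fact that $C_i(X^{\perp})$ is defined as an intersection of $C_i$ with a fixed subspace of $\M$ (namely, the subspace of matrices whose row space lies in $X^{\perp}$), forces this inequality.

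Concretely, set $V = \{M\in \M : \text{row space of } M \subseteq X^{\perp}\}$, an $\mathbb{F}_q$-subspace of $\M$. Then by Definition~\ref{CXrhoC}(a) we have $C_i(X^{\perp}) = C_i \cap V$ for $i=1,2$. Because $C_2\subseteq C_1$, the obvious $\mathbb{F}_q$-linear map
$$
\frac{C_1\cap V}{C_2\cap V} \longrightarrow \frac{C_1}{C_2}, \qquad c + (C_2\cap V)\longmapsto c+C_2,
$$
is well-defined and injective (if $c\in C_1\cap V$ lies in $C_2$, then it lies in $C_2\cap V$). Taking dimensions yields exactly the inequality above, which is the desired $\rho_2(X)\le \rho_1(X)$.

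There is no serious obstacle here; the only thing to verify carefully is that $C_i(X^{\perp})$ is indeed cut out by intersecting $C_i$ with a subspace of $\M$ that does not depend on $i$, so that the quotient map argument applies uniformly. Once that is observed, the result is a one-line consequence of the injection of subquotients.
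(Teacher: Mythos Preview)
Your proof is correct and follows essentially the same approach as the paper: both identify $C_i(X^{\perp})$ as $C_i\cap V$ for a fixed subspace $V\subseteq \M$ independent of $i$, and then deduce the desired dimension inequality via a standard subquotient/second-isomorphism-theorem argument. The only cosmetic difference is that the paper compares $C_i/(C_i\cap V)$ via the chain $(C_2+V)/V\subseteq (C_1+V)/V$, whereas you compare $(C_1\cap V)/(C_2\cap V)$ with $C_1/C_2$; these are two rearrangements of the same identity.
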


\begin{proof}
Note that the row space of any $A\in \M$ consists of vectors $\mathbf{v}A$ as $\mathbf{v}$ varies over $\Fq^m$ (elements of $\Fq^m$ and $\Fq^n$ are thought of as row vectors); also note that 
$(\mathbf{v}A) \cdot \mathbf{u} = \mathbf{u}(\mathbf{v}A)^t = (\mathbf{u}A)^t\mathbf{v}^t$ for any $\mathbf{u}\in \Fq^n$.  Now let $X\in \Sigma(E)$ and define 
$$
U = \{A \in \Mat : \mathbf{u}A^t = \mathbf{0} \text{ for all } \mathbf{u} \in X \}.
$$ 
Clearly, $U$ is a subspace of $\M$ and $C(X^\perp) = C\cap U$ for any Delsarte code $C$. Also, 
$$
\frac{C_2}{C_2\cap U} \simeq \frac{C_2+U}{U} \subseteq \frac{C_1+U}{U} \simeq \frac{C_1}{C_1\cap U}.
$$
Hence $\dim C_2 - \dim C_2 \cap U \le \dim C_1 - \dim C_1 \cap U$, which yields  $\rho_2(X) \le \rho_1(X)$. 
\end{proof}

\begin{lemma}\label{Lem2F}
Let $C_1, C_2$ be Delsarte codes in $\M = \Mat$ such that $C_2\subseteq C_1$ and let $X, Y\in \Sigma(E)$ be such that $X\subseteq Y$. Then 
$$
\dim C_1(X) -  \dim C_2(X) \le \dim C_1(Y) - \dim C_2(Y).
$$
\end{lemma}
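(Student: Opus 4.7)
The plan is to exhibit a natural injection of the quotient space $C_1(X)/C_2(X)$ into $C_1(Y)/C_2(Y)$; the conclusion is then immediate from comparison of dimensions.

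First, observe that the hypotheses $C_2 \subseteq C_1$ and $X \subseteq Y$ give the basic inclusions $C_2(X) \subseteq C_1(X) \subseteq C_1(Y)$ and $C_2(X) \subseteq C_2(Y) \subseteq C_1(Y)$ directly from the definition of the subcode $C_i(Z) = \{A \in C_i : \mathrm{rowsp}(A) \subseteq Z\}$. Composing the inclusion $\iota: C_1(X) \hookrightarrow C_1(Y)$ with the canonical projection $\pi: C_1(Y) \to C_1(Y)/C_2(Y)$ yields an $\Fq$-linear map $\pi \circ \iota: C_1(X) \to C_1(Y)/C_2(Y)$.

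The key step is to identify the kernel of $\pi \circ \iota$ as $C_2(X)$. By construction, this kernel equals $C_1(X) \cap C_2(Y)$. The inclusion $C_2(X) \subseteq C_1(X) \cap C_2(Y)$ is obvious. For the reverse inclusion, suppose $A \in C_1(X) \cap C_2(Y)$; then $A \in C_2$ and $\mathrm{rowsp}(A) \subseteq X$ (the latter because $A \in C_1(X)$), so $A \in C_2(X)$. Hence the first isomorphism theorem produces an injection
$$
\overline{\pi \circ \iota}: C_1(X)/C_2(X) \hookrightarrow C_1(Y)/C_2(Y),
$$
and comparing dimensions gives $\dim C_1(X) - \dim C_2(X) \le \dim C_1(Y) - \dim C_2(Y)$, as desired.

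I do not foresee a serious obstacle; the only subtle point is the identification $C_1(X) \cap C_2(Y) = C_2(X)$, which hinges on the observation that for a matrix $A$ whose row space is already contained in $X$, lying in $C_2(Y)$ is the same as lying in $C_2$ (and hence in $C_2(X)$). Note that Lemma \ref{Lem1F} is not needed for this argument, although the present proof is in the same spirit as that of Lemma \ref{Lem1F}, where one similarly compares $C_2/(C_2 \cap U)$ with $C_1/(C_1 \cap U)$.
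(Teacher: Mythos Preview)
Your proof is correct, and it is considerably cleaner than the paper's. The heart of the matter is indeed the identity $C_1(X)\cap C_2(Y)=C_2(X)$, which you verify by the one-line observation that ``$A\in C_2$'' and ``$\mathrm{rowsp}(A)\subseteq X$'' are independent conditions; once this is in place, the first isomorphism theorem gives the injection of quotients and the inequality drops out.

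The paper proceeds quite differently. It fixes compatible bases $\mathcal{B}_X\subseteq\mathcal{B}_Y\subseteq\mathcal{B}$ of $\M(X)\subseteq\M(Y)\subseteq\M$, uses them to identify Delsarte codes with linear block codes of length $mn$, writes generator matrices of $\pi(C_1)$ and $\pi(C_2)$ in block form according to this coordinate splitting, and then extracts the inequality from rank comparisons among the submatrices. Your argument bypasses all of this coordinate apparatus: it is basis-free, shorter, and makes the underlying reason for the inequality transparent (it is really just the second isomorphism theorem applied to the inclusion $C_1(X)\hookrightarrow C_1(Y)$ modulo $C_2(Y)$). The paper's approach, on the other hand, is closer in spirit to the Forney/relative-profile literature on block codes, which may explain the choice there.
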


\begin{proof}
First observe that $\dim \M(X) = m \dim X$. Indeed, any $A\in \M(X)$ uniquely determines a $\Fq$-linear map $\phi_A: \Fq^m \to X$ given by $\mathbf{v}\mapsto \mathbf{v}A$, and $A\mapsto \phi_A$ defines an isomorphism of $\M(X)$ with the space of all inear maps from $\Fq^m$ to $X$. Similarly, $\dim \M(Y) = m \dim Y$. Let  $\mathcal{B}_X, \mathcal{B}_Y$ and $\mathcal{B}$ be bases of $\M(X), \M(Y)$ and $\M$ such that $\mathcal{B}_X\subseteq \mathcal{B}_Y \subseteq \mathcal{B}$. We can use the basis $\mathcal{B}$ to define a $\Fq$-linear isomorphism $\pi: \M \to \Fq^{mn}$ so that Delsarte codes $C$ in $\M$ can be identified with linear block codes $\pi (C)$ of length $mn$. Write generator matrices of $\pi(C_1)$  and $\pi(C_2)$ 
as 
$$
G_1 = \begin{pmatrix} P & Q & R \\  S & T & U  \end{pmatrix}_{\dim C_1 \times mn} \quad \text{and} \quad G_2=  \begin{pmatrix} P  & Q & R   \end{pmatrix}_{\dim C_2 \times mn} 
$$
where the blocks $P$ and $S$ correspond to coordinates with respect to  $\mathcal{B}_X$ while the blocks $Q$ and $T$ correspond to coordinates with respect to $\mathcal{B}_Y\setminus \mathcal{B}_X$. 
By removing from $G_1$ superfluous rows that may have become linearly dependent when restricted to coordinates w.r.t. $\mathcal{B}_Y$, we see that a generator matrix for $\pi(C_1(Y))$ is of the form 
$$
\begin{pmatrix} P' & Q'  \\  S' & T'  \end{pmatrix} 
$$
and its submatrix $\begin{pmatrix} P' & Q'  \end{pmatrix}$ is a generator matrix for $\pi(C_1(X))$.
Consequently, 
$$
\dim C_1(Y) - \dim C_2(Y) = \rk \begin{pmatrix} P' & Q'  \\  S' & T'  \end{pmatrix} - \rk \begin{pmatrix} P' & Q'  \end{pmatrix} = \rk \begin{pmatrix}  S' & T'  \end{pmatrix}.
$$
On the other hand, 
$$
 \rk \begin{pmatrix}  S' & T'  \end{pmatrix} \ge 
\rk S' \ge \rk  \begin{pmatrix}  P' \\ S'  \end{pmatrix} -  \rk P' =  \dim C_1(X) -  \dim C_2(X).
$$
This proves the desired inequality. 
\end{proof}

Here is the result that was alluded to earlier in this section. 

\begin{theorem} \label{flagrank}
Let $\mathsf{F} = (C_1, \dots , C_s)$ be a flag of Delsarte 
codes in $\M$ and let $\rhoF$ be the rank function associated to $\F$. Then $\PF =(E,\rhoF)$  is a $(q,m)$-demi-polymatroid.
\end{theorem}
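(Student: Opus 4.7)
My plan is to verify conditions (R1), (R2), and (R4) for $\rhoF$ directly, leveraging Lemmas~\ref{Lem1F} and~\ref{Lem2F} together with a simple observation about alternating sums. The key observation is this: if $a_1 \ge a_2 \ge \cdots \ge a_s \ge 0$ is a non-increasing sequence of non-negative reals, then
\[
0 \;\le\; \sum_{i=1}^s (-1)^{i+1} a_i \;\le\; a_1,
\]
since pairing consecutive terms from the left gives a sum of non-negative quantities, while pairing from the second term onward gives $a_1$ minus a sum of non-negative quantities. I will apply this observation repeatedly, each time to a sequence whose monotonicity comes from one of the two preceding lemmas.

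First, for (R1) applied to $\rhoF$: Lemma~\ref{Lem1F}, applied to consecutive codes $C_{i+1} \subseteq C_i$ in the flag, shows that $\rho_1(X) \ge \rho_2(X) \ge \cdots \ge \rho_s(X) \ge 0$ for every $X \in \Sigma(E)$. The alternating-sum observation yields $0 \le \rhoF(X) \le \rho_1(X)$, and since $\rho_1(X) \le m\dim X$ (because $(E,\rho_1) = P(C_1)$ is a $(q,m)$-polymatroid), condition (R1) follows.

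Next, for (R2) applied to $\rhoF$: given $X \subseteq Y$, set $b_i := \rho_i(Y) - \rho_i(X) = \dim C_i(X^\perp) - \dim C_i(Y^\perp)$. Since $Y^\perp \subseteq X^\perp$ and $C_{i+1} \subseteq C_i$, Lemma~\ref{Lem2F} (applied with the roles of subspaces being $Y^\perp \subseteq X^\perp$) gives $b_i \ge b_{i+1} \ge 0$. The alternating-sum observation then yields $\rhoF(Y) - \rhoF(X) = \sum_{i=1}^s (-1)^{i+1} b_i \ge 0$, establishing (R2).

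Finally, for (R4), I will compute $\rhoFs$ in terms of the "support dimensions" $\delta_i(X) := \dim C_i(X)$. Using $\rho_i(X^\perp) = \dim C_i - \delta_i(X)$ and $\rho_i(E) = \dim C_i$, one finds after telescoping that
\[
\rhoFs(X) \;=\; m\dim X \;-\; \sum_{i=1}^{s} (-1)^{i+1} \delta_i(X).
\]
Since $C_{i+1}(X) \subseteq C_i(X)$, the sequence $\delta_i(X)$ is non-increasing in $i$, and $\delta_1(X) \le m\dim X$ because $C_1(X) \subseteq \M(X)$. Applying the alternating-sum observation proves (R1) for $\rhoFs$. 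For (R2) on $\rhoFs$ with $X \subseteq Y$, set $c_i := \delta_i(Y) - \delta_i(X) \ge 0$; Lemma~\ref{Lem2F} applied directly to the pair $X \subseteq Y$ and $C_{i+1} \subseteq C_i$ gives $c_i \ge c_{i+1}$, so $\sum (-1)^{i+1} c_i \le c_1 \le m(\dim Y - \dim X)$ by Proposition~\ref{tech2}(b) applied to $(E,\delta_1)$, whence $\rhoFs(Y) - \rhoFs(X) \ge 0$. I do not foresee a major obstacle here; the main thing to be careful about is the sign bookkeeping in the telescoping computation of $\rhoFs$, and the fact that both Lemmas~\ref{Lem1F} and~\ref{Lem2F} are invoked in the "right" direction (the former for monotonicity in $i$, the latter for monotonicity of the difference in $i$).
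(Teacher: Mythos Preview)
Your argument is correct and, in fact, somewhat cleaner than the paper's. For (R1) and (R2) on $\rhoF$ you do essentially what the paper does: pair consecutive terms and invoke Lemma~\ref{Lem1F} for the monotonicity of $\rho_i(X)$ in $i$, and Lemma~\ref{Lem2F} for the monotonicity of the increments $b_i$ in $i$. Where you diverge is in the verification of (R4). The paper proceeds recursively, writing $\rhoF = \rho_1 - \rho_{\mathsf{F}'}$ with $\mathsf{F}' = (C_2,\dots,C_s)$ and then combining (R1), (R2) for $\rho_1^*$ with (R2) for $\rho_{\mathsf{F}'}$, and finally expressing $\rho^*(Y)-\rho^*(X)$ in terms of the nullity $\nu_1$ of $P(C_1)$. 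You instead compute $\rhoFs(X) = m\dim X - \sum_i (-1)^{i+1}\delta_i(X)$ directly and reapply the same alternating-sum observation, now to the sequence $\delta_i(X)$ (for (R1)) and to the increments $c_i = \delta_i(Y)-\delta_i(X)$ (for (R2)). This is a more uniform and transparent route: the same elementary inequality about alternating sums of a non-increasing non-negative sequence does all the work four times.

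One small correction: your justification of $c_1 \le m(\dim Y - \dim X)$ is miscited. Applying Proposition~\ref{tech2}(b) to the demi-polymatroid $(E,\delta_1)$ bounds the difference of its \emph{nullity} $m\dim(\cdot)-\delta_1(\cdot)$, which only recovers monotonicity of $\delta_1$. What you actually want is Proposition~\ref{tech}(b) applied to the $(q,m)$-polymatroid $P(C_1)$, since $\delta_1 = \nu^*_{C_1}$ is its conullity (see \eqref{nuC}); equivalently, use (R2) for $\rho_{C_1}^*$, which reads $m\dim X - \delta_1(X) \le m\dim Y - \delta_1(Y)$. With that fix, your proof stands.
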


\begin{proof}
First, suppose $s$ is even, say $s=2t$. Then for any $X\in \Sigma(E)$, 
\begin{equation}\label{evenrho}
\rhoF(X) = \sum_{i=1}^t \big(\dim \rho^{ }_{2i-1}(X) - \dim \rho^{ }_{2i}(X)\big).
\end{equation}
By Lemma \ref{Lem1F}, each 
summand is nonnegative, and so $\rhoF(X) \ge 0$. In case $s=2t+1$, 
\begin{equation}\label{oddrho}
\rhoF(X) =  \dim \rho^{ }_{2t+1}(X) + \sum_{i=1}^t\big( \dim \rho^{ }_{2i-1}(X) - \dim \rho^{ }_{2i}(X)\big).
\end{equation}
and once again $\rhoF(X) \ge 0$, thanks to Remark~\ref{nuCstar}  and Lemma \ref{Lem1F}. Next, if $s>1$ and if $\F' =(C_2, \dots , C_s)$ denotes the flag obtained from $\F$ by dropping the first term, then by what is just shown $\rhoFp(X) \ge 0$ for any $X\in \Sigma(E)$. Hence 
$$
\rhoF(X) = \rho^{ }_1(X) - \rhoFp(X) \le  \rho^{ }_1(X)  \le m \dim X \quad \text{for all } X\in \Sigma(E).
$$
This shows that $\PF$ satisfies (R1).  Next, let $X, Y\in \Sigma(E)$ with $X \subseteq Y$. We will show that $\rhoF(X) \le \rhoF(Y)$. To this end, observe that  since $\rhoF(X)$ (and likewise $\rhoF(Y)$) can be expressed as in \eqref{evenrho} or \eqref{oddrho}, and since $\rho^{ }_{2t+1}$ satisfies (R2), it suffices to show that the difference
$$
\big( \dim \rho_{1}(Y) - \dim \rho_{2}(Y) \big) - \big(\dim \rho_{1}(X) - \dim \rho_{2}(X) \big)
$$
is nonnegative. But an easy calculation shows that this difference is equal to 
$$
\big( \dim C_1(X^{\perp})-\dim C_2(X^{\perp}) \big)- \big(\dim C_1(Y^{\perp})-\dim C_2(Y^{\perp}) \big).
$$
But since $Y^\perp \subseteq X^\perp$, by Lemma \ref{Lem2F}, the above difference is nonnegative. This shows that $\PF$ satisfies (R2). To prove that $\PF$ satisfies (R4), note that the case $s=1$ is trivial. Thus suppose $s>1$ and let  $\F' =(C_2, \dots , C_s)$. Also, let 
$$
\rho^*(X)= \rhoF(X^{\perp})+m\dim X-\rhoF(E) \quad \text{and}\quad \rho'(X) = \rhoFp(X) \quad \text{for } X\in \Sigma(E).
$$
Since $\rhoF = \rho^{ }_1 - \rho'$ and since $\rho^{*}_1$ satisfies (R1) while $\rho'$ satisfies (R2), we see that
$$
\rho^*(X)= \big( \rho_1(X^{\perp})+m\dim X -\rho_1(E) \big) +\big( \rho'(E)-\rho'(X^{\perp})\big) \ge 0 +0=0
$$
Also, $\rho^*(X)= m\dim X+\big( \rhoF (X^{\perp})-\rhoF (E) \big) \le m\dim X$, since $\rhoF$ satisfies (R2). Thus $\rho^*$ satisfies (R1). Finally, if $X, Y\in \Sigma(E)$ with $X \subseteq Y$, then we can write $\rho^*(Y)-\rho^*(X)
= \big(\rho(Y^{\perp})+m\dim Y - \rho(E)\big)- \big(\rho(X^{\perp})+m\dim X - \rho(E)\big)$ as 
\begin{eqnarray*}
& & m\big(\dim Y -\dim X\big)+\big(\rho(Y^{\perp})-\rho(X^{\perp})\big) \\
 &=& m\big(\dim X^{\perp}-\dim Y^{\perp})-\big( \rho^{ }_1 (X^{\perp})-\ \rho^{ }_1 (Y^{\perp}) \big) +\big(\rho'(X^{\perp})-\rho'(Y^{\perp})\big) \\
 &=&\big( \nu_1(X^{\perp})-\nu_1(Y^{\perp}) \big) +\big(\rho'(X^{\perp})-\rho'(Y^{\perp}) \big),
\end{eqnarray*}
where $\nu_1$ denotes the nullity function of $(E,  \rho^{ }_1 )$. Thus, using Proposition \ref{tech} (a) and the fact that $\rho'$ satisfies (R2), we see that $\rho^*$ satisfies (R4). 
\end{proof}

Using Theorem \ref{flagrank} and Definition \ref{Defdrdemi}, we can talk about generalized weights of flags of Delsarte codes. The following observation makes them explicit. 

\begin{lemma}
Let $\mathsf{F} = (C_1, \dots , C_s)$ be a flag of Delsarte 
codes. 
Then the conullity function $\nusF$ of the associated $(q,m)$-demi-polymatroid $\PF =(E,\rhoF)$  is given by
$$
\nusF(X) = \sum_{i=1}^s(-1)^{i+1}\dim C_i(X)  \quad \text{for } X\in \Sigma(E).
$$
\end{lemma}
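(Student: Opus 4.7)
The plan is to reduce the claim to a direct unpacking of the definition of the conullity function of a $(q,m)$-demi-polymatroid. Recall that for any $(q,m)$-demi-polymatroid $(E,\rho)$, the conullity is
$$
\nu^*(X) = m\dim X - \rho^*(X) = \rho(E) - \rho(X^\perp),
$$
so I would apply this formula to $\PF=(E,\rhoF)$ and then substitute in the explicit expression for $\rhoF$ given in \eqref{rhoFdefn}.

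The key steps, in order, are as follows. First, compute $\rhoF(E)$. Since $E^\perp=\{\mathbf{0}\}$ and the only matrix whose row space is contained in $\{\mathbf{0}\}$ is the zero matrix, we have $C_i(E^\perp)=\{\mathbf{0}\}$ for each $i$, hence $\rho_i(E)=\dim C_i$. Therefore
$$
\rhoF(E) = \sum_{i=1}^{s}(-1)^{i+1}\dim C_i.
$$
Second, compute $\rhoF(X^\perp)$ for an arbitrary $X\in\Sigma(E)$. Using $(X^\perp)^\perp=X$, we get $\rho_i(X^\perp)=\dim C_i - \dim C_i(X)$, and therefore
$$
\rhoF(X^\perp) = \sum_{i=1}^{s}(-1)^{i+1}\bigl(\dim C_i - \dim C_i(X)\bigr).
$$
Third, subtract:
$$
\nusF(X) = \rhoF(E) - \rhoF(X^\perp) = \sum_{i=1}^{s}(-1)^{i+1}\dim C_i(X),
$$
since the $\dim C_i$ terms cancel in pairs. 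This is exactly the asserted formula.

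There is essentially no obstacle here: the lemma follows formally once one knows that $\PF$ is a $(q,m)$-demi-polymatroid (which justifies using the conullity formula) and that $C_i((X^\perp)^\perp) = C_i(X)$. The former is Theorem~\ref{flagrank}, already proved, and the latter is elementary from the definition of $C_i(\cdot)$ together with the involutive property $(X^\perp)^\perp=X$ recorded in Section~\ref{sec:defnnot}. The only minor point worth checking explicitly is the identification $C_i(\{\mathbf{0}\})=\{\mathbf{0}\}$, which is immediate since a matrix with row space contained in the zero subspace must itself be zero.
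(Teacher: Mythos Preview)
Your proof is correct and follows essentially the same approach as the paper: both compute $\nusF(X)=\rhoF(E)-\rhoF(X^\perp)$ and simplify using the definition of $\rho_i$. The paper compresses the computation by invoking the identity $\nu^*_i(X)=\dim C_i(X)$ from Remark~\ref{nuCstar}, whereas you unpack this inline, but the argument is the same.
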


\begin{proof}
For $i=1, \dots , s$, let $\rho^{ }_i$ be as in \eqref{rhoFdefn} and let $\nu^*_i$ be the conullity function of the $(q,m)$-polymatroid $(E, \rho^{ }_i)$. Then in view of \eqref{nuC} in Remark \ref{nuCstar} we see that 
$$
\nusF(X) = \rhoF(E) - \rhoF(X^\perp) = \sum_{i=1}^s(-1)^{i+1}\nu^*_i(X) = \sum_{i=1}^s(-1)^{i+1}\dim C_i(X).
$$
for any $X\in \Sigma(E)$.
\end{proof}

We can now introduce the following generalization of 
Definition \ref{Def-drC}. 

\begin{definition}\label{Def-drF}
{\rm 
Let $\mathsf{F} = (C_1, \dots , C_s)$ be a flag of Delsarte 
codes in $\M$, and let $K = \rhoF(E) = \sum_{i=1}^s(-1)^{i+1}\dim C_i$. Then for $r=1, \dots , K$, the $r$th \emph{generalized weight} of $\F$ is denoted by $d_r(\F)$ or by $d_{\M,r}(C_1,\cdots,C_s)$, and is defined by 
$$
d_r(\F)
 =  \min \big\{\dim X : X\in \Sigma(E) \text{ with }    \sum_{i=1}^s(-1)^{i+1}\dim C_i(X) \ge r 
\big\}.
$$
}
\end{definition}
 
We 
remark that for $s=2$, these generalized weights were already defined by Mart\'{\i}nez-Pe\~{n}as and Matsumoto \cite[Definition 10]{MM},
and are referred to as \emph{RGMW profiles}, where RGMW stands for Relative Generalized Matrix Weights.
In \cite{MM} one studies these and and related profiles, and the interplay between them, in a way that carries the ideas and results of Luo,  Mitrpant, Han Vinck, and Chen \cite{LMVC} for pairs of block codes over to the world of Delsarte rank metric codes, in a way similar to the one, in which the relative profiles in \cite{LMVC}  are generalizations of those ``absolute ones" in the work of  Forney \cite{forney94} for single block codes.

Now that we have associated a $(q,m)$-demi-polymatroid $\PF =(E,\rhoF)$  to a flag $\F$ of Delsarte codes, it seems natural to ask whether 
$\PF^*$ is also a $(q,m)$-demi-polymatroid associated to some flag of Delsarte codes. The answer is yes, and it involves, quite naturally, the dual flag defined as follows. 

By the \emph{dual flag} corresponding to a flag  $\mathsf{F} = (C_1, \dots , C_s)$  of Delsarte codes, we mean the flag $\F^{\perp} = (C_s^\perp, \dots , C_1^\perp)$ of Delsarte codes, where $C_i^\perp$ is the trace dual of $C_i$ for $i=1, \dots , s$. Note that $C_1^{\perp} \subseteq \cdots \subseteq C_s^{\perp}$ so that $\F^\perp$ is indeed a flag in the sense of Definition \ref{RhoF}. Note also that $(\F^{\perp})^{\perp}=\F,$

Here is an analogue of \cite[Theorem 10]{BJM} for 
Delsarte rank metric codes, 

\begin{proposition} \label{dualflag}
Let $\F = (C_1, \dots , C_s)$ be a flag of Delsarte codes and $\F^\perp$ the dual flag corresponding to $\F$. Also let $\nusF$ denote the conullity function of the $(q,m)$-demi-polymatroid $\PF =(E,\rhoF)$ associated to $\F$. Then 
$$
\rhoFperp = \begin{cases} \rhoF^* & \text{if } s \text{ is odd}, \\
\nusF & \text{if } s \text{ is even}. \end{cases}
$$
\end{proposition}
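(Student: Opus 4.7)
The plan is to expand both sides in terms of the functions $\dim C_i(X)$ (for $i=1,\dots,s$) and compare term by term, splitting on the parity of $s$.

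First I would unpack the left-hand side by direct application of Definition~\ref{RhoF} to the flag $\F^{\perp}=(C_s^\perp,\dots,C_1^\perp)$. Reindexing the sum by $j=s+1-i$ and noting that $(-1)^{i+1}=(-1)^{s-j}=(-1)^{s+j}$, one obtains
$$
\rhoFperp(X) \;=\; (-1)^s \sum_{j=1}^{s}(-1)^{j}\rho^{ }_{C_j^\perp}(X).
$$
The key identity I would then establish (this is the one observation that does all the real work) is
$$
\rho^{ }_{C^\perp}(X) \;=\; m\dim X-\dim C(X)\qquad\text{for every Delsarte code }C\text{ and }X\in\Sigma(E).
$$
This follows by combining Theorem~\ref{commutes}, which gives $\rho^{ }_{C^\perp}=\rho^{*}_{C}$, with the definition $\rho^{*}_C(X)=m\dim X-\nu^{*}_C(X)$ and Remark~\ref{nuCstar}, which identifies $\nu^{*}_C(X)=\dim C(X)$.

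Substituting this identity into the expression for $\rhoFperp(X)$ yields
$$
\rhoFperp(X) \;=\; (-1)^s\Bigl(m\dim X\cdot\sum_{j=1}^{s}(-1)^{j}\Bigr)-(-1)^s\sum_{j=1}^{s}(-1)^{j}\dim C_j(X).
$$
Here I would split into the two parities. If $s$ is even, then $\sum_{j=1}^{s}(-1)^{j}=0$ and $(-1)^s=1$, so the first sum vanishes and the second is exactly $\sum_{j=1}^s(-1)^{j+1}\dim C_j(X)=\nusF(X)$, by the lemma immediately preceding the proposition. If $s$ is odd, then $\sum_{j=1}^{s}(-1)^{j}=-1$ and $(-1)^s=-1$, so we get
$$
\rhoFperp(X)\;=\;m\dim X+\sum_{j=1}^{s}(-1)^{j}\dim C_j(X).
$$
On the other hand, evaluating $\rhoF^*(X)=\rhoF(X^\perp)+m\dim X-\rhoF(E)$ from Definition~\ref{RhoF}, noting that the ``global'' terms $\dim C_i$ appear both in $\rhoF(X^\perp)$ (since $\rho^{ }_{C_i}(X^\perp)=\dim C_i-\dim C_i(X)$) and in $\rhoF(E)$, produces exactly the same expression $m\dim X+\sum_{j=1}^{s}(-1)^{j}\dim C_j(X)$.

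The main obstacle is bookkeeping: getting the reindexing signs right and carefully tracking the two different ways the $\dim C_i$ global constants cancel. The underlying content is just Theorem~\ref{commutes} packaged through the reversal of the flag, combined with the trivial fact that $\sum_{j=1}^{s}(-1)^{j+1}$ alternates between $1$ and $0$ according to the parity of $s$; this parity dichotomy is exactly what forces the bifurcation in the statement of the proposition.
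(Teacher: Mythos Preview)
Your proof is correct. The paper itself does not give a self-contained argument here; it simply states that the proof follows ``word for word'' the proof of the corresponding result for linear block codes in \cite{BJM}. Your direct computation---reindexing the sum defining $\rhoFperp$, invoking Theorem~\ref{commutes} to rewrite $\rho^{ }_{C_j^\perp}(X)=m\dim X-\dim C_j(X)$, and then splitting on the parity of $s$ via the alternating sum $\sum_{j=1}^s(-1)^j$---is exactly the natural way to do this, and is presumably what the referenced argument amounts to once transplanted to the Delsarte setting. In this sense your write-up is actually more complete than what appears in the paper.
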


\begin{proof}
A proof can be given, following word for word the proof of the corresponding result, \cite[Theorem 10]{BJM},
for linear block codes.
\end{proof}

 Proposition \ref{dualflag} identifies the dual $(q,m)$-demi-polymatroid of $\PF$ as that associated to the dual flag,
when $\F$ is a flag of odd length $s$, including the case $s=1$ (a case which trivially follows from Theorem \ref{commutes}).
But what about the cases when $s$ is even, including $s=2$, which perhaps are the most interesting ones? To this we remark that our study does not require  the Delsarte codes in the flag to be distinct. Thus, 
whenever $s$ is even, we can 
formally ``add" a subspace $C_{s+1}=\{ 0 \}$,
(irrespective of whether or not $C_S=\{ 0 \}$) to obtain a longer flag $\mathsf{G}$ of odd length $s+1$. 
Then it is easily seen that $\rhoG=\rhoF$, and using the duality for flags of even length, we obtain 
$$
\rhoF^*=\rhoG^*= 
{\rho_{\rule{0pt}{2ex}\{0\}^\perp }} - {\rho_{\rule{0pt}{2ex}C_{s}^\perp }} + \cdots + (-1)^s {\rho_{\rule{0pt}{2ex}C_{1}^\perp }} 
= {\rho_{\rule{0pt}{2ex}\M }} - {\rho_{\rule{0pt}{2ex}C_{s}^\perp }}  + \cdots + {\rho_{\rule{0pt}{2ex}C_{1}^\perp }} .
$$ 
In particular, in the important case $s=2$ studied 
in \cite{MM}, 
$$
\rhoF^* =  {\rho_{\rule{0pt}{2ex}\M }} - {\rho_{\rule{0pt}{2ex}C_{2}^\perp }} + {\rho_{\rule{0pt}{2ex}C_{1}^\perp }} 
$$ 

Furthermore, if one wants to organize the set of flags, into disjoint subsets, self-dual both with respect to $(q,m)$-demi-polymatroid duality, and the duality $\F \rightarrow \F^{\perp}$ of flags, we may as a convention first assume that all the Delsarte codes in each flag are distinct and nonzero. Then we get two cases, namely, flags of odd length and flags of even length. Now we modify
our convention and add the zero code as the innermost code in all even length flags. After this is done, all flags have odd length $2t+1$,
and all the Delsarte codes in each flag $\F=(C_1, \dots  , C_{2t+1})$ are distinct, and the possibility that both $C_{2t+1}= \{0\}$, and $C_1=M$ is perimitted. 
We call such flags as \emph{normalized flags}. We can then deduce from Proposition \ref{dualflag} the following.

\begin{proposition} \label{matchingduality}
Each $(q,m)$-demi-polymatroid associated to 
a flag (of Delsarte codes) in $\M$ comes from a unique normalized flag in $\M$ of odd length. For each $t=1,2, \cdots, \lfloor \frac{mn}{2}\rfloor$, the class of flags of length $2t+1$, and also its associated class of $(q,m)$-demi-polymatroids, is self-dual, and further, 
$$
\PF^* = P(\F^\perp) \quad \text{for every normalized flag } \F.
$$
\end{proposition}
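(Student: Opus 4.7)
The plan is to verify the three assertions of the proposition in turn, with Proposition \ref{dualflag} serving as the key input. The normalization procedure described immediately before the statement is central: normalizing a flag $\F = (C_1, \dots, C_s)$ amounts to deleting consecutive repetitions of equal codes, and then adjusting parity by dropping or appending a trailing zero code so that the length becomes odd.

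For the first assertion, I would show that the normalization procedure does not change $\PF$. Since $\rhoF(X)$ is an alternating sum of the $\rho^{ }_{C_i}(X)$, two consecutive equal codes $C_i = C_{i+1}$ contribute terms of opposite sign that cancel pointwise, so deleting such a pair preserves $\rhoF$. Similarly, an innermost zero code $C_s = \{\mathbf{0}\}$ contributes $(-1)^{s+1}\rho^{ }_{\{\mathbf{0}\}}(X) = 0$ to $\rhoF(X)$ and can be freely added or dropped without affecting the rank function. Hence the normalization procedure preserves $\PF$, and since the procedure is deterministic, each flag has a unique associated normalized flag of odd length giving rise to the same $(q,m)$-demi-polymatroid.

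For the second assertion, namely $\PF^* = P(\F^\perp)$ for a normalized flag $\F$, I would simply invoke Proposition \ref{dualflag}. By definition, a normalized flag has odd length $2t+1$, and the corresponding branch of that proposition gives $\rhoFperp = \rhoF^*$, which is precisely the desired identity of rank functions, and hence of $(q,m)$-demi-polymatroids.

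For the third assertion, the self-duality of the class of normalized flags of length $2t+1$ under $\F \mapsto \F^\perp$, I would check that the dual flag is again normalized of the same length. The length is preserved by reversal of the tuple, and distinctness of codes is preserved since $C \mapsto C^\perp$ is an involution on Delsarte codes. The delicate point is the ``boundary'' codes: a normalized flag permits $C_1 = \M$ and $C_{2t+1} = \{\mathbf{0}\}$, and these two possibilities get swapped under dualization because $\{\mathbf{0}\}^\perp = \M$ and $\M^\perp = \{\mathbf{0}\}$ while position reversal swaps the first and last entries. Thus $\F^\perp$ is again a normalized flag of length $2t+1$, and together with step two this yields the self-duality of the associated class of $(q,m)$-demi-polymatroids. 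The main obstacle in the whole argument is really just this bookkeeping at the boundary; once one sees that the outermost ambient code $\M$ and the innermost zero code $\{\mathbf{0}\}$ are interchanged by dualization \emph{and} by position reversal in a compatible way, everything else is a direct appeal to Proposition \ref{dualflag}.
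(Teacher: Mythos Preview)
Your argument is essentially the paper's intended one: the paper gives no explicit proof, simply stating that the proposition is deduced from Proposition \ref{dualflag}, and your detailed verification (cancellation of repeated terms, vanishing of $\rho_{\{\mathbf{0}\}}$, invoking the odd-length case of Proposition \ref{dualflag}, and checking that the dual of a normalized flag is again normalized of the same length) spells this out correctly.

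One small caveat on the uniqueness clause: you establish that the normalization \emph{procedure} is deterministic, so every flag has a well-defined normalized form with the same rank function. But the proposition, read literally, asserts that distinct normalized flags yield distinct $(q,m)$-demi-polymatroids, which is a stronger injectivity statement that your argument (and the paper's informal discussion) does not actually verify. Since the paper itself does not address this point and the remaining assertions---self-duality of each length class and $\PF^* = P(\F^\perp)$---are the operative ones for the subsequent applications, your treatment matches the level of rigor in the source.
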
 

\begin{remark} \label{longflags}
{\rm 
If $\F$ is a normalized flag, then the longest possible length of $\F$ is clearly $mn+1$.  
The longest flag $(C_1, \dots , C_{mn+1})$ of distinct Delsarte codes in $\M$ will necessarily have $C_{mn+1}=\{0\}$, Hence if  $mn$ is odd, then it is not normalized, but if we delete $C_{mn+1}=\{0\}$ then it does become normalized and has odd length.  
Thus, the length of the longest normalized flag in $\M$ is $2t+1$, where $t=\lfloor \frac{mn}{2}\rfloor$.
 %
}
\end{remark}

As an immediate consequence of Theorem \ref{DemiDuality} and Proposition \ref{matchingduality}, we obtain a duality for the generalized weights of (normalized) flags of Delsarte codes. In the special case $s=2$ studied in \cite{MM}, this duality can be stated as follows.

\begin{corollary}
Let $C_1, C_2$ be distinct Delsarte codes in $\Mat$ with $C_2\subset C_1$. Then the relative generalized  weights
$$
d_r =\min \{\dim X : X \in \Sigma(E) \text{ with }  \dim C_1(X) - \dim C_2(X) \ge r\}
$$ 
are related to the relative generalized  weights 
$$
d_r^{\perp} =\min \{\dim X : X \in \Sigma(E) \text{ with }  \dim \M(X) - \dim C_2^{\perp}(X)+\dim C_1^{\perp}(X) \ge r\}
$$
via the ``$m$-fold" Wei duality described in Corollary \ref{important}, with $K=\dim C_1-\dim C_2.$
\end{corollary}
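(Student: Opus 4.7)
The plan is to apply Theorem~\ref{DemiDuality} to the $(q,m)$-demi-polymatroid attached to the normalized flag formed from the pair $(C_1,C_2)$, and then to read off the duality in code-theoretic terms.

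First, I would set $\F = (C_1, C_2, \{\mathbf{0}\})$. Assuming $C_2\neq\{\mathbf{0}\}$ (the case $C_2=\{\mathbf{0}\}$ reduces immediately to Corollary~\ref{codewei}), the triple $\F$ has three distinct terms and is a normalized flag of odd length $3$ in the sense of the paragraph preceding Proposition~\ref{matchingduality}. By Theorem~\ref{flagrank}, $\PF = (E, \rhoF)$ is a $(q,m)$-demi-polymatroid of rank
$$
\rk\PF \;=\; \rhoF(E) \;=\; \dim C_1 - \dim C_2 \;=\; K.
$$
The lemma just after Theorem~\ref{flagrank} that computes $\nusF$ then gives
$$
\nusF(X) \;=\; \dim C_1(X) - \dim C_2(X) + \dim\{\mathbf{0}\}(X) \;=\; \dim C_1(X) - \dim C_2(X),
$$
so that Definition~\ref{Defdrdemi} identifies $d_r$ with the generalized weight $d_r(\PF)$.

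Next, by Proposition~\ref{matchingduality}, $\PF^{*} = P(\F^{\perp})$ where $\F^{\perp} = (\{\mathbf{0}\}^{\perp}, C_2^{\perp}, C_1^{\perp}) = (\M, C_2^{\perp}, C_1^{\perp})$, using $\{\mathbf{0}\}^{\perp}=\M$ for the trace-product dual. Applying the same conullity formula to $\F^{\perp}$ yields
$$
\nu^{*}_{\F^{\perp}}(X) \;=\; \dim\M(X) - \dim C_2^{\perp}(X) + \dim C_1^{\perp}(X),
$$
so that $d_r^{\perp} = d_r(\PF^{*})$. Note also that $\rk\PF^{*} = mn - K$, which matches the expected range of the dual weights in Corollary~\ref{important}.

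Finally, Theorem~\ref{DemiDuality} extends the $m$-fold Wei duality of Corollary~\ref{important} to $(q,m)$-demi-polymatroids, so applying it to the pair $(\PF, \PF^{*})$ immediately yields the claimed duality between $\{d_r\}$ and $\{d_r^{\perp}\}$ with $K = \dim C_1 - \dim C_2$. No step presents a serious obstacle: all the substance is already packed into Theorem~\ref{flagrank}, Proposition~\ref{matchingduality} and Theorem~\ref{DemiDuality}, and the only bookkeeping is the normalization of the length-$2$ flag to length~$3$ together with the trace-dual identification $\{\mathbf{0}\}^{\perp} = \M$.
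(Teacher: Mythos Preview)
Your argument is correct and follows exactly the route the paper takes: normalize the length-$2$ flag to $\mathsf{G}=(C_1,C_2,\{\mathbf{0}\})$, invoke Theorem~\ref{flagrank} and the conullity lemma to identify $d_r$ with $d_r(P(\mathsf{G}))$, use Proposition~\ref{matchingduality} (equivalently the odd-length case of Proposition~\ref{dualflag}) to get $P(\mathsf{G})^*=P(\M,C_2^{\perp},C_1^{\perp})$ and hence $d_r^{\perp}=d_r(P(\mathsf{G})^*)$, and then apply Theorem~\ref{DemiDuality}. The paper merely states that the corollary is an immediate consequence of Theorem~\ref{DemiDuality} and Proposition~\ref{matchingduality}; your write-up supplies precisely the bookkeeping that justifies that sentence.
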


We end this paper by giving some examples.

\begin{example} \label{bigexample}
{\rm 
Let $n=3$ so that $E=\mathbb{F}_q^3$, and let $m=5$. 
The the full matrix space $\M$ has dimension $3 \times 5=15$. Let $X \subset Y$ be two subspaces of $E$ of dimension $1$ and $2$, respectively. 
Consider $C_2=\M(X)$ of dimension $1\times 5=5$, and $C_1=\M(Y)$ of dimension $2\times 5=10.$
For the flag $\F= (C_1, C_2)$, 
we then obtain for any $Z\in \Sigma(E)$, 
\begin{eqnarray*}
\rhoF(Z) &=& \dim \M(Y) - \dim \M(Y)(Z^{\perp}) - \big( \dim \M(X)-\dim \M(X)(Z^{\perp}) \big)\\
&= & 5\big( \dim Y - \dim X \big) - \big( \dim \M( Y \cap Z^{\perp})-\dim \M( X \cap Z^{\perp}) \big) \\
&=& 5\big(1 - \dim Y \cap Z^{\perp} + \dim  X \cap Z^{\perp} \big).
\end{eqnarray*}
We obtain a positive value (and that value is $5$) if and only if $Y \cap Z^{\perp} =  X \cap Z^{\perp}.$
This happens if and only if 
$Z + Y^{\perp} =  Z + X^{\perp}$ (cf. \cite[Lemma 5]{ravagnani15}). In case $\dim Z =1$, the latter happens if and only if 
$Z \subset X^{\perp}$, but $Z \ne  Y^{\perp}$.
For such $Z$, we see that the nullity function satisfies $\nuF(Z)= 5\dim Z-\rhoF(Z)=5-5=0$, but   
$\nuF(Z)= 5\dim Z-\rhoF(Z)=5-0=5$ for the other one-dimensional $Z$. 
For two-dimensional $Z$, the value of $\rhoF(Z)$ is positive if and only if $Z$ is not a plane intersecting $X^{\perp}$ in $Y^{\perp}$.
In this case, $\nuF(Z)= 5\dim Z-\rhoF(Z)=10-5=5$, whereas $\nuF(Z)= 5\dim Z-\rhoF(Z)=10-0=10$
for the other two-dimensional $Z$. For the unique $3$-dimensional space in $\Sigma(E)$, we find 
$\rhoF(E)=\rho^{ }_1(E)-\rho^{ }_2(E)=10-5=5, $ and so the rank $K$ of the $(q,5)$-demi-polymatroid is $5$, and $\nuF(E)=15-5=10.$
Hence the generalized weights $d_r$ of the dual $(q,5)$-demi-polymatroid $P(\F)^*$ satisfy  
$$
d_1=d_2=d_3=d_4=d_5=1 \quad \text{and moreover,} \quad d_6=d_7=d_8=d_9=d_{10}=2.
$$

Furthermore if $A$ and $B$ are two different planes through the origin, both 
intersecting $X^{\perp}$ in $Y^{\perp}$,
then $\rhoF(A)=\rhoF(B)=0$, while $\rhoF(A \cup B)=\rhoF(Y^{\perp})=0$ and 
$\rhoF(A + B)=\rhoF(E)=5.$
Hence axiom (R3) is violated, and $P(\F) = (E, \rhoF)$ is not a $(q,m)$-polymatroid. 
But of course, axioms (R1) and (R2) hold, and this can easily be checked directly. 

The dual normalized flag $\F^{\perp}$ is $(\M(X^{\perp}), \, \M(Y^{\perp}))$, and this is seen from  
the following result, which is easily verified:
$$
\M(A)^{\perp}=\M(A^{\perp}) \quad \text{for all subspaces $A$ of $E$, }
$$
where the first $\perp$ refers to orthogonality in the sense of trace duals, and the second $\perp$ refers to orthogonality for the standard dot product in $E$.
We then obtain for an arbitrary subspace $Z$ of $E$:
\begin{eqnarray*}
\rhoF^{\perp}(Z) &=& \dim \M - \dim \M(Z^{\perp}) - \big(\dim \M(X^{\perp})-\dim \M(X^{\perp} )(Z^{\perp} ) \big)  \\
&   & \phantom{\dim \M - \dim \M(Z^{\perp}) } +  \big(\dim \M(Y^{\perp}) - \dim \M(Y^{\perp})(Z^{\perp}) \big)\\
&= & 5\dim Z - 5 + 5\dim (X^{\perp} \cap Z^{\perp}) - 5 \dim (Y^{\perp} \cap Z^{\perp}).
\end{eqnarray*}
If $Z=\{0\}$, then this is $0-5+10-5=0$, and 
if $Z=E$, then it is $15-5+0-0=10$, which is the rank $mn-K=5\times 3 -5$ of $P(\F)^*$.
Hence ${\nu}^*(E)=15-10=5$,  as expected.
For $\dim Z =1$ or $2$, we can proceed 
as in the dual case above. 
Here $\rhoF^{\perp}(Z)$ is $5\dim Z - 5$ if $Z + Y=Z+X$, and it is $5\dim Z$ if
$Z+Y$ strictly contains $Z+X$. For $\dim Z=1$, we then get $\rhoF^{\perp}(Z)=0$ if and only if $Z$ is a line in $Y$ 
different from $X$, 
and $\rhoF^{\perp}(Z)=5$ otherwise. 
In other words $\nusF(Z)$ is nonzero (and equal to $5$)
if and only if $Z$ is a line in $Y$  different from $X$. 
For a $2$-dimensional subspace we obtain $\rhoF^{\perp}(Z)=5\dim Z -5 +0=5$ if $Z=Y$ or $Z$ is transversal to $X$, and 
 so ${\nu}^*(Z)=10-5=5$ then.
Moreover, $\rho_F^{\perp}(Z)=5\dim Z -5+5=10$ if $Z$ contains $X$ and is different from $Y$; hence ${\nusF}(Z)=10-10=0$ in this case. 
 One then easily  checks that (R1) and (R2) hold for $\rhoF ^*=\rhoF^{\perp}$, and therefore the condition (R4) holds for $P(\F)=(E,\rhoF)$, which indeed is a $(q,5)$-demi-polymatroid. From the determination of the values ${\nusF}(Z)$ above we see that for the generalized weights of the $(q,5)$-demi-polymatroid $P= P(\F)$ are given by 
$$
d_1=d_2=d_3=d_4=d_5=1.
$$ 
For an arbitrary $p$, modulo $5$, say $p=2$, let us check the values of 
$d_{p+im}(P^*)$ and $n+1-d_{p+K+jm}(P)=4-d_{p+K+jm}(P)$, where $P= \PF$. 
Since $K=5$ is divisible by $m=5$, we just check the $d_r(P)$ as well as $d_r(P^*)$ for $r\equiv 2 \pmod 5$. 
We see that 
$\{d_{p+im}(P^*)\}=\{d_2(P*), d_7(P^*)\}=\{1,2\}$, whereas 
$\{4-d_{p+K+jm}(P)\}=\{4-d_{2+jm}(P)\}=\{4-1\}=\{3\}.$
So these sets are disjoint and ``fill up" $\{1,2,3\}.$ The analogous statements of course hoøld also for $p$ congruent to $ 0,1,3,4$ modulo $5$.}
\end{example}

\begin{example} \label{previous}
{\rm Given $s$ vector subspaces $V_1,\dots,V_s$  of $E=\mathbb{F}_q^n$ and positive integers $m_1, \dots , m_s$ with $m_1+\cdots+m_s=m$, set $P=(E,\rho)$, where 
$$
\rho(J) =m_1 \rho_1(J)+\cdots+m_s\rho_s(J), \quad \text{where} \quad \rho^{ }_i(J)=\dim_{\mathbb{F}_q} (V_i  \cap J)
$$
for $i=1, \dots , s$ and $J\in \Sigma(E)$.
Then $P=(E,\rho)$ is,  in general, not a $(q,m)$-polymatroid.
As an example, take $n=2$,  $m=s=1$, and $\rho=\rho_1$.
Let $V_1$ be the diagonal $\{(a,a) \in E :  a \in \mathbb{F}_q\}.$ Also, let $J_1$ be the $x$-axis and $J_2$ the $y$-axis in 
$\mathbb{F}_q^2.$ Then $\rho(J_1)=\rho(J_2)=\dim (\{0\})=0$, but $\rho(J_1+J_2)= \dim (V_1  \cap E)=1$. Consequently,   $\rho(J_1 \cap J_2)+ \rho(J_1+J_2) >\rho(J_1)+\rho(J_2).$

The pair  $P=(E,\rho)$ is, however, a $(q,m)$-demi-polymatroid. Moreover, $P_i = (E, \rho^{ }_i) $ is a $(q,1)$-demi-polymatroid for each $i=1, \dots$. To see this, let us fix some $i\in \{1, \dots , s\}$. It is clear that $P_i$ 
satisfies axioms (R1) and (R2) with $m=1$. Further, for any $J\in \Sigma(E)$, we can write
$\rho^*_i(J)=\rho^{ }_i(J^{\perp})+\dim  J -\rho^{ }_i(E)$ as
$$
\dim (V_i\cap J^\perp) + \dim J - \dim V_i = \dim V_i + \dim J^\perp - \dim (V_i+J^\perp) + \dim J - \dim V_i .
$$
Since $\dim J^\perp + \dim J = \dim E$, it follows that
$$
\rho^*_i(J)= \dim E - \dim (V_i+J^\perp) =  \dim \big( (V_i+J^\perp)^\perp \big) = \dim (V_i^{\perp} \cap J).
$$
This shows that $\rho^*_i$ also satisfies axioms (R1) and (R2) with $m=1$. Thus $P_i$ is a $(q,1)$-demi-polymatroid.  As a consequence, we see that $\rho=\sum_{i=1}^s m_i\rho^{ }_i$ satisfies (R1) and (R2), and hence so does 
$\rho^*=\sum_{i=1}^s m_i\rho^{*}_i$. Thus $P=(E,\rho)$ is indeed a $(q,m)$-demi-polymatroid. The rank of $P$ is $K=\rho(E)=\sum_{i=1}^s m_i\dim V_i$. We observe that the ``independent" sets, i.e., those with ${\nu}(J)=m\dim J- \rho(J)=0$, are the ones that are contained in $V_1\cap \cdots \cap V_s$. 

For a $1$-dimensional $J\in \Sigma(E)$, we see that $\rho^*(J)$ is $\sum_{i=1}^s b_i$, where $b_i=m_i$ whenever  $V_i^{\perp}$  contains $J$, and $b_i=0$ otherwise. 
So, if there exists at least one $1$-dimensional $J$ for which at least one $V_i^{\perp}$ does not contain $J$, then we see that ${\nu}^*(J) \ge 1$, and the first generalized weight of $P$ is  $d_1=1$.  It follows that  $d_1(P)=1$, unless $V_i^{\perp}=E$ for all $i$, or equivalently, 
$\rho=0$.

We remark that the functions $\rho$ and $\rho^*$ are, in fact, the conullity and nullity functions of the $(q,m)$-polymatroid of Example \ref{sumsofmatroids} in disguise. Also, as noted in 
Proposition \ref{BasicDemi},  the  conullity and nullity functions of $(q,m)$-polymatroids give rise to $(q,m)$-demi-polymatroids, but not necessarily 
$(q,m)$-polymatroids. 
}
\end{example}

\section*{Acknowledgement}
We are grateful to Fr\'{e}d\'{e}rique Oggier for introducing us to rank metric codes and providing initial motivation for this work. We thank the DST in India and RCN in Norway for supporting our collaboration through an Indo-Norwegian project ``Mathematical Aspects of Information Transmission: Effective Error Correcting Codes". The last named author  is grateful to the Department of Mathematics, IIT Bombay for its warm 
hospitality, and for providing optimal conditions for research, during a visit in September--December 2018 when most of this work was done.

We also thank Elisa Gorla for her useful comments on the first version of this preprint and for pointing out the survey article  
\cite{G}. 
Those comments led us to make an appropriate revision in Theorem \ref{ardr} and Remark \ref{imprem}, and to introduce Proposition~\ref{demisquare}. 


\begin{thebibliography}{10}
\providecommand{\url}[1]{#1}
\csname url@rmstyle\endcsname
\providecommand{\newblock}{\relax}
\providecommand{\bibinfo}[2]{#2}
\providecommand\BIBentrySTDinterwordspacing{\spaceskip=0pt\relax}
\providecommand\BIBentryALTinterwordstretchfactor{4}
\providecommand\BIBentryALTinterwordspacing{\spaceskip=\fontdimen2\font plus
\BIBentryALTinterwordstretchfactor\fontdimen3\font minus
  \fontdimen4\font\relax}
\providecommand\BIBforeignlanguage[2]{{%
\expandafter\ifx\csname l@#1\endcsname\relax
\typeout{** WARNING: IEEEtran.bst: No hyphenation pattern has been}%
\typeout{** loaded for the language `#1'. Using the pattern for}%
\typeout{** the default language instead.}%
\else
\language=\csname l@#1\endcsname
\fi
#2}}


\bibitem {BJM} T.~Britz, T.~Johnsen and J.~Martin,  \textit{Chains, demi-matroids and profiles}, 
 IEEE Trans. Inform. Theory \textbf{60} (2014), 986--991.

\bibitem{BJMS} T. Britz, T. Johnsen, D, Mayhew, and K. Shiromoto,  \textit{Wei-type duality theorems for matroids.} Des. Codes Cryptogr. \textbf{62} (2012), 
331--341. 


\bibitem{Del}
P. Delsarte, \textit{Bilinear forms over a finite field, with applications to
coding theory}, J. Combin. Theory Ser. A,  \textbf{25} (1978), 
226--241. 

\bibitem{Duc}
J. Ducoat, \textit{Generalized rank weights: A duality statement}, in Topics in
Finite Fields,  
G. M. Kyureghyan and A. Pott, Eds., 
Contemp. Math. \textbf{632}, Amer. Math. Soc.,  Providence, RI, (2015) 114--123.

\bibitem{Gab85} E. M. Gabidulin, \textit{Theory of codes with maximum rank distance},
Probl. Inf. Transm. 
\textbf{21} (1985), 
1--12. 

\bibitem{forney94} G. F.~Forney, \textit{Dimension/Length profiles and Trellis complexity of linear block codes},
 IEEE Trans. Inform. Theory  ~{\bf 40} (1994), 
1741--1752. 

\bibitem{G} 
E. Gorla,  \textit{Rank-Metric Codes}, 
{\tt arXiv:1902.02650 [cs.IT]} (2019), 26 pp. 

\bibitem{GJLR} E. Gorla, R. Jurrius, H.H. Lopez, A. Ravagnani, \textit{Rank-metric codes and q-polymatroids}, 
{\tt arXiv:1803.10844 [cs.IT]} (2018), 21 pp.

\bibitem{JV} T. Johnsen and H. Verdure,  \emph{Relative generalized Hamming weights and extended weight polynomials of almost affine codes}, 
A. I. Barbero, V. Skachek and {\O}. Ytrehus Eds., 
Coding Theory and Applications, LNCS 
 \textbf{10495}, Springer, Cham (2017), 207--216.

\bibitem{JV2} T. Johnsen and H. Verdure,  \emph{Hamming weights and Betti numbers of Stanley-Reisner rings asociated to matroids},
 Appl. Algebra Engrg. Comm. Comput. 
\textbf{24} (2013), 207--216. 

\bibitem{JP2017}
R. Jurrius and R. Pellikaan, \textit{On defining generalized rank weights},
Adv. Math. Commun., \textbf{11} (2017), 225--235. 

\bibitem{JP} R. Jurrius and R. Pellikaan, \textit{Defining the $q$-analogue of a matroid}, 
Electron. J. Combin. \textbf{25} (2018), 
\#P3.2, 32 pp. 

\bibitem{KMU}
J. Kurihara, R. Matsumoto, and T. Uyematsu,  \textit{Relative generalized rank
weight of linear codes and its applications to network coding}, IEEE
Trans. Inform. Theory, \textbf{61} (2015),  3912--3936. 

\bibitem{L} M. Lenz, \textit{Stanley-Reisner rings for quasi-arithmetic matroids}, {\tt arXiv:1709.03834 [math.CO]}, 14 pp., 2017. 

\bibitem{larsen05} A.H.~Larsen, \textit{Matroider og line{\ae}re koder},
 Master's thesis, 
Univ. Bergen, Norway, 2005.
\href{http://bora.uib.no/bitstream/handle/1956/10780/Ann-Hege-totaloppgave.pdf?sequence=1}
{http://bora.uib.no/bitstream/handle/1956/10780/Ann-Hege-totaloppgave.pdf?sequence=1}

\bibitem{LMVC} Y. Luo, C. Mitrpant,  A.J. Han Vinck, and K. Chen, \textit{Some new characters on the wire-tap channel of type II}, IEEE Trans. Inform. Theory,  \textbf{51} (2005), 1222--1227. 

\bibitem{MM} U. Mart\'{\i}nez-Pe\~{n}as and R.  Matsumoto, \textit{Relative generalized matrix weights of matrix codes
 for universal security on wire-tap network},
{{IEEE} Trans. Inform. Theory},  \textbf{64} (2018), 2529--2548. 

\bibitem{OS}
F. Oggier and A. Sboui, \textit{On the existence of generalized rank weights},
Proc. Int. Symp. Inf. Theory Appl., 
pp. 406--410, 2012. 


\bibitem{ravagnani15} A.~Ravagnani, \textit{Rank-metric codes and their duality theory}, 
Des. Codes Cryptogr., \textbf{80} (2016), 197--216. 
	
\bibitem{ravagnani16} A.~Ravagnani, \textit{Generalized weights: An anticode approach}, 
J. Pure Appl. Algebra, \textbf{220} (2016), 1946--1962. 


\bibitem{S} K.~Shiromoto, \textit{Codes with the rank metric and matroids}, 
Des. Codes Cryptogr.  (2018), 12~pp.  \href{https://doi.org/10.1007/s10623-018-0576-0}{https://doi.org/10.1007/s10623-018-0576-0}
	
	\bibitem{wei91} V.~K.~Wei, \textit{Generalized Hamming weights for linear codes},
 \emph{{IEEE} Trans. Inform. Theory}, \textbf{37} (1991), 1412--1418. 

\end{thebibliography}
\end{document}